\documentclass[a4paper,twoside]{article}

\usepackage{epsfig}
\usepackage{url}
\usepackage{comment}
\usepackage{subcaption}
\usepackage{calc}
\usepackage{amssymb}
\usepackage{amstext}
\usepackage{amsmath}
\usepackage{amsthm}
\usepackage{multicol}
\usepackage{pslatex}
\usepackage{apalike}
\usepackage{algorithm2e}
\usepackage[bottom]{footmisc}
\usepackage{amsmath,amssymb,amsthm}
\usepackage{graphicx}
\usepackage[colorlinks=true,citecolor=blue,linkcolor=blue,urlcolor=blue]{hyperref}
\usepackage{cite}
\usepackage[shortlabels]{enumitem}
\usepackage{booktabs}
\usepackage{tabularx}
\usepackage{xcolor}
\usepackage{algorithm2e}
\usepackage{tikz}
\usetikzlibrary{shapes,arrows,positioning,calc,fit}
\usepackage[T1]{fontenc}
\usepackage{times}
\usepackage{float}
\usepackage{pgfplots}
\pgfplotsset{compat=1.17}

\newtheorem{theorem}{Theorem}

\newtheorem{definition}{Definition}

\newtheorem{remark}{Remark}

\usepackage{SCITEPRESS}     

\begin{document}

\title{upTPM: Unbounded Preprocessing for Schnorr Multi-Signatures on TPM}

\author{\authorname{Yunusa Simpa Abdulsalam\sup{1}, Mustapha Hedabou\sup{1}}
\affiliation{\sup{1}College of Computing, University Mohammed VI Polytechnic, Benguerir, Morocco}
\email{abdulsalam.yunusa@um6p.ma}
}

\keywords{Trusted Platform Module, Schnorr Multi-signatures, Preprocessing, Threshold Signatures.}

\abstract{Schnorr-based multi-signature schemes support offline preprocessing of nonce commitments to reduce online signing to a single round. However, preprocessing is inherently bounded: each preprocessed nonce pair consumes signer-side storage, and once exhausted, an interactive commitment round is required to refill. This limitation is particularly severe for TPM~2.0 devices, where usable NVRAM is typically 6--16\,KB and connectivity is intermittent. This paper presents upTPM, a framework that achieves unbounded preprocessing with constant signer storage. Each TPM stores a single 32-byte secret seed from which an unlimited sequence of nonce commitments is deterministically derived. Commitments are published to an untrusted coordinator before use; nonce scalars never leave the TPM. We formalize three properties not provided by existing schemes: (1)~unbounded deterministic preprocessing with constant storage; (2)~asynchronous commitment refill, allowing any signer to unilaterally extend its commitment pool; and (3)~TPM-attested commitments, a hardware-backed authenticity and state-binding mechanism that strengthens resistance to host-software compromise. We prove EU-CMA security in the random oracle model under the discrete logarithm assumption and Pseudo Random Function (PRF) security, with a one-time-use invariant enforced by TPM hardware state. We extend the construction to $(t,n)$-threshold signatures and provide a detailed analysis of coordinator trust, crash recovery, and performance evaluations.}

\onecolumn \maketitle \normalsize \setcounter{footnote}{0} \vfill

\section{\uppercase{Introduction}}
\label{sec:introduction}

The modern internet era has enhanced the easy execution of digital applications in finance and IT, with electronic devices being the drivers for their implementation and usage. The need for electronic devices in today's world has become crucial. Research shows that 80.61\% of the 2021 world population owned a mobile device, which is a 31.21\% increase from 2016, and it is expected to reach 91\% by the year 2025 \cite{Statista}. In developed countries like the United States, Singapore, and Australia, an average user is likely to own more than one digital electronic device, with over 80\% equipped with TPM chips \cite{chen2014} . The literature translates users' ability to own more than one electronic device as a possibility for cross-device functionality ranging from financial operations to photos and video synchronization \cite{svenda2024}. A typical example is how electronic devices offer digital wallet implementation for payment. Unfortunately, most financial institutions have had difficulty adopting digital wallets in the financial sector, owing to their unsecured design or failed implementations. 

Cryptographic signatures and algorithms enable online transactions implemented in hardware and software. Unfortunately, more breaches have occurred in software-based multi-signature wallets \cite{boireau2018securing}. For cryptocurrency applications, for instance, owners access their credentials through a web-based authentication support or a hardware device that performs cryptographic operations. The software approach has suffered several setbacks regarding key exploitation by rogue impersonation. In 2016, the Bitfinex hack extorted private keys for multi-signature operations from different accounts to steal \$72 million worth of bitcoin \cite{baldwin2016bitcoin}. 
A very recent one is the Bybit multisignature compromise that stole Ethereum-based coins worth \$1.5 billion \cite{bybit2025hack}. As a result of unsecured software implementations, cryptocurrency has suffered a considerable drawback, and low adoption \cite{arapinis2019formal}. The hardware approach, such as TPM, is more tamper-resistant and aims to provide a secure environment for signing \cite{andrianagkaniatsou2017low}. TPMs are embedded with keys that never leave the device. The embedded keys are primarily used for private operations, such as generating signatures and performing other cryptographic operations. Unfortunately, existing multi-signature schemes impose requirements poorly suited to TPM hardware.

\subsection{Motivation}

TPM-based multi-signature is a Schnorr-type signature,
$Setup, KeyAgg, Sign, Verify$ executed within TPM 2.0 hardware, where private keys $x_i$ are non-exportable. State-of-the-art schemes like MuSig2~\cite{nick2020musig2} and FROST~\cite{komlo2020frost} have converged on a two-round structure: a commitment round where signers exchange nonce commitments, followed by a signing round. Both schemes support preprocessing, where nonce commitments can be generated and exchanged before the message is known, reducing the online phase to a single round. However, preprocessing in MuSig2 and FROST is inherently bounded. Each preprocessed session requires the signer to store a nonce pair. For $k$ preprocessed sessions, the signer must persist $k$ nonce pairs (64--128 bytes each). On TPM~2.0 hardware, where usable NVRAM is typically 6--16\,KB~\cite{svenda2024}, this limits preprocessing to roughly 50--125 sessions. Once exhausted, the signer must come online for an interactive commitment refill, re-introducing the synchronization and availability problems that preprocessing was meant to solve.

This creates a tension we call the \textbf{preprocessing gap}: bounded preprocessing helps with latency but does not solve the fundamental availability problem for resource-constrained devices with intermittent connectivity. A device that goes offline and returns to find its preprocessed nonces exhausted must participate in an interactive round before it can sign. ROAST~\cite{roast2022} addresses robustness in asynchronous networks but still relies on FROST's interactive commitment rounds. MuSig-DN~\cite{nick2020musigdn} enables deterministic nonces via NIZK proofs, but the proof exchange is itself interactive and computationally expensive.

\subsection{Our Contributions}

We present upTPM, a framework that closes the preprocessing gap through three contributions:

\begin{enumerate}
\item Unbounded deterministic preprocessing with constant storage: Each signer stores a single secret seed $\mathit{seed}_i$ (32 bytes) inside its TPM. From this seed, the signer deterministically derives an unlimited sequence of nonce scalars and publishes commitments. The nonce scalars never leave the TPM. We prove EU-CMA security under the DL assumption and PRF security in the random oracle model, with a one-time-use invariant enforced by TPM hardware state.

\item Asynchronous commitment refill: upTPM allows any signer to unilaterally generate and publish a batch of new commitments at any time, instead of signers being online simultaneously to run a new commitment round. We prove that this preserves EU-CMA security.

\item Crash-recoverable index management: We design a two-counter scheme implemented as TPM~2.0 monotonic Non-Volatile (NV) counters that enforce strictly sequential nonce consumption. We provide a complete analysis of coordinator crash recovery, showing that no security property is violated under state loss and that forward progress is guaranteed whenever the TPM is functional.

\end{enumerate}

We further generalize our framework to $(t,n)$-threshold signatures using Shamir secret sharing. We additionally show how  TPM attestation quotes can serve as a hardware-backed authenticity and state-binding mechanism for preprocessing commitments, strengthening resistance to host-software compromise


\subsection{Outline}

Section~\ref{sec:related} reviews related work. Section~\ref{sec:prelim} defines preliminaries. Section~\ref{sec:framework} presents the protocol with detailed index management and crash recovery analysis. 
Section~\ref{sec:threshold} gives the threshold extension and analyzes the coordinator trust model. Section~\ref{sec:security} provides the full security analysis. Section~\ref{sec:performance} evaluates performance. Section~\ref{sec:conclusion} presents the conclusion.

\section{\uppercase{Related Work}}
\label{sec:related}
Cryptographic signature schemes are an essential factor in controlling the access, authenticity, and integrity of users in any communication network \cite{schnorr1991efficient}. Early digital signature scheme of Schnorr \cite{schnorr1991efficient}, based on the difficulty of the discrete logarithm problem of large primes $p$ in $\mathbb{Z}_{p}$ provided authenticity and verification.

Schnorr multi-signatures have now evolved from the three-round Bellare--Neven scheme~\cite{bellare2006multi}. 
However, the works of Bellare--Neven \cite{bellare2006multi} provided a more practical approach by eliminating the key generation protocol of \cite{micali2001accountable}, where each signer in the group of signers generates a distinct challenge for partial signature generation. Successful implementation of the BN scheme gives rise to other variants, such as the Bagherzandi scheme \cite{bagherzandi2008multisignatures} that implemented the homomorphic function in the first round of the commitment stage when generating multi-signatures. Ma et al., \cite{ma2010efficient} proposed a two-round multi-signature scheme proven secure in the random oracle model with better signature sizes compared to \cite{bellare2006multi,bagherzandi2008multisignatures}. Maxwell et al., \cite{maxwell2019} proposed a variant of \cite{schnorr1991efficient} that supports key aggregation and is proven secure under the Discrete Logarithm assumption. The scheme proposed by \cite{maxwell2019} improves on the works of \cite{bellare2006multi,bagherzandi2008multisignatures,ma2010efficient} by supporting key aggregation and producing an efficient signature with the same signature size as \cite{schnorr1991efficient}. Nick et al., \cite{nick2020musig2} finally proposed a variant multi-signature scheme of \cite{maxwell2019}. Their proposed scheme was a novel two-round scheme that supports key aggregation and security based on the underlying multi-signatures of \cite{schnorr1991efficient}. MuSig2 supports preprocessing where the first round is executed before the message is known, but each preprocessed session consumes signer storage. A key insight of MuSig2 is the use of multiple nonce pairs per session to achieve concurrent security; our construction is compatible with this approach (each preprocessed index can store a tuple of commitments if needed). MuSig-DN~\cite{nick2020musigdn} enables deterministic nonces via NIZK proofs of correct derivation, but the proof exchange remains interactive.

For threshold signatures, FROST~\cite{komlo2020frost} provides two-round threshold Schnorr signatures with preprocessing support. ROAST~\cite{roast2022} wraps FROST to achieve robustness in asynchronous networks, but the underlying commitment exchange remains interactive. The IETF has standardized FROST as RFC~9591~\cite{rfc9591}. Boldyreva~\cite{boldyreva2003} proposed threshold BLS signatures, but these require pairing-friendly curves. Regarding TPM-based cryptography, Chen and Li~\cite{chen2013} proposed flexible TPM~2.0 signatures. Hedabou and Abdulsalam~\cite{hedabou2020} implemented BLS multi-signatures on TPM, but BLS requires pairing-friendly curves not natively supported by TPM~2.0. Svenda et al.~\cite{svenda2024} conducted a wide-scale study of TPM~2.0 security properties, revealing nonce-related vulnerabilities that underscore the importance of correct nonce handling. Chen et al.~\cite{chen2014} proposed cloud-based TPM collaboration.

All existing Schnorr multi-signature preprocessing schemes are \emph{bounded}: storage scales linearly with the number of preprocessed sessions. upTPM achieves \emph{unbounded} preprocessing with constant signer storage by keeping nonce-generating seeds inside the TPM and publishing only commitments. The coordinator is untrusted, matching the MuSig2~\cite{nick2020musig2}, FROST~\cite{komlo2020frost}, and ROAST \cite{roast2022} security model.

\section{\uppercase{Preliminaries}}
\label{sec:prelim}

\subsection{Notation}

Let $\mathbb{G}$ be a cyclic group of prime order $p$ with generator $g$. We write $\mathbb{Z}_p$ for integers modulo $p$. Let $H: \{0,1\}^* \to \mathbb{Z}_p$ be a hash function modeled as a random oracle, and $H_{\mathrm{agg}}: \{0,1\}^* \to \mathbb{Z}_p$ be an independent hash function for key aggregation. $\mathsf{PRF}: \{0,1\}^\lambda \times \{0,1\}^* \to \mathbb{Z}_p$ is a pseudorandom function family. $\|$ denotes concatenation. Table~\ref{tab:notation} summarizes the notation.

\begin{table}[ht]
\centering
\caption{Notation summary}
\label{tab:notation}
\small
\begin{tabularx}{\columnwidth}{lX}
\toprule
\textbf{Symbol} & \textbf{Description} \\
\midrule
$x_i, pk_i$ & Signer $i$'s private key and public key $g^{x_i}$ \\
$\mathit{seed}_i$ & Signer $i$'s secret nonce seed (inside TPM) \\
$w_i^{(j)}, R_i^{(j)}$ & $j$-th nonce scalar and commitment for signer $i$ \\
$\mathit{gen}_i$ & Generation counter: highest index ever created \\
$\mathit{next}_i$ & Consumption counter: next index to be consumed \\
$a_i$ & Aggregation coefficient $H_{\mathrm{agg}}(L \| pk_i)$ \\
$\mathit{apk}$ & Aggregated public key $\prod pk_i^{a_i}$ \\
$S$ & Signer set; $L = \mathsf{sort}(\{pk_j\}_{j \in S})$ \\
$B$ & Preprocessing batch size \\
\bottomrule
\end{tabularx}
\end{table}

\subsection{Formal Definitions}

\begin{definition}[Discrete Logarithm Assumption]
\label{def:dl}
For any PPT algorithm $\mathcal{A}$: $\Pr[x \gets \mathcal{A}(g, g^x) : x \in \mathbb{Z}_p] \leq \mathsf{negl}(\lambda)$ where $x \gets \mathbb{Z}_p$ uniformly.
\end{definition}

\begin{definition}[Pseudorandom Function]
\label{def:prf}
A keyed function family $\{\mathsf{PRF}(k, \cdot)\}_{k \in \{0,1\}^\lambda}$ is secure if for any PPT distinguisher $\mathcal{D}$:
\[
|\Pr_{k \gets \{0,1\}^\lambda}[\mathcal{D}^{\mathsf{PRF}(k,\cdot)} = 1] - \Pr_{f \gets \mathcal{F}}[\mathcal{D}^{f(\cdot)} = 1]| \leq \mathsf{negl}(\lambda)
\]
where $\mathcal{F}$ is the set of all functions from $\{0,1\}^*$ to $\mathbb{Z}_p$.
\end{definition}

\begin{definition}[EU-CMA Security for upTPM]
\label{def:eucma}
The upTPM multi-signature scheme is EU-CMA secure if for any PPT adversary $\mathcal{A}$ using the following experiment outputs $1$ with at most negligible probability:

\smallskip\noindent\underline{$\mathsf{Exp}^{\mathrm{eu\text{-}cma}}_\mathcal{A}(\lambda)$:}
\begin{enumerate}[label=\arabic*.,itemsep=0pt,leftmargin=2em]
\item Run $\mathsf{Setup}(1^\lambda, n)$. Give all public keys to $\mathcal{A}$.
\item $\mathcal{A}$ adaptively issues queries:
  \begin{itemize}[itemsep=0pt]
  \item $\mathcal{O}_{\mathrm{Corrupt}}(i)$: reveals $\mathit{seed}_i$ to $\mathcal{A}$ and gives $\mathcal{A}$ full control over signer $i$'s protocol messages. Does \emph{not} reveal $x_i$.
  \item $\mathcal{O}_{\mathrm{Preprocess}}(i)$: returns the next batch of commitments for honest signer $i$.
  \item $\mathcal{O}_{\mathrm{Sign}}(M, S)$: if $S$ contains at least one honest signer and unused commitments exist, returns a valid signature $\sigma$ on $M$ under signer set $S$. Records $M$ in set $\mathcal{Q}$.
  \end{itemize}
\item $\mathcal{A}$ outputs $(M^*, S^*, \sigma^*)$.
\item Output $1$ iff: $\mathsf{Verify}(M^*, \sigma^*, \{pk_i\}_{i \in S^*}) = 1$, and $M^* \notin \mathcal{Q}$, and $S^*$ contains at least one honest signer.
\end{enumerate}
\end{definition}

\begin{remark}
    The adversary observes all public data 
    The signing oracle is a black box that returns only the aggregate signature $\sigma = (R,s)$; the adversary does not get the honest signers' individual partial signature shares $s_i$ as a separate oracle output; they are sent through authenticated channels.
\end{remark}

\subsection{TPM 2.0 Primitives}

We use the following TPM~2.0 operations:
\begin{itemize}[leftmargin=*,itemsep=1pt]
\item $\mathsf{TPM2\_Create}$: Generates a non-exportable key pair $(x_i, pk_i)$.
\item $\mathsf{TPM2\_NV\_DefineSpace}$ / $\mathsf{NV\_Write}$: Allocates and writes NVRAM.
\item $\mathsf{TPM2\_NV\_Increment}$: Atomically increments a monotonic NV counter. The TPM specification~\cite{tpm_spec} guarantees the counter never decreases, even across power loss or reboots.
\item $\mathsf{TPM2\_Quote}$: Produces a signed attestation over specified data, bound to the TPM's attestation key (AK).
\item $\mathsf{TPM2\_HMAC}$: Computes HMAC inside the TPM using a loaded key, without exposing key material to host software.
\end{itemize}

\section{\uppercase{The upTPM Framework}}
\label{sec:framework}
Our proposed upTPM framework separates the protocol into three phases: \emph{setup}, \emph{preprocessing} (offline, asynchronous), and \emph{online signing} (single round). The coordinator is an untrusted server that stores public commitments and routes messages, as shown in Figure \ref{fig:arch}. 

\begin{figure}[ht]
\centering
\begin{tikzpicture}[
  node distance=0.4cm and 0.5cm,
  box/.style={rectangle, draw, minimum width=1.4cm, minimum height=0.6cm, font=\scriptsize, align=center},
  bigbox/.style={rectangle, draw, minimum width=2.2cm, minimum height=0.7cm, font=\scriptsize, align=center},
]
\node[box, fill=blue!10] (t1) {TPM$_1$\\$\mathit{seed}_1, x_1$};
\node[box, fill=blue!10, right=0.4cm of t1] (t2) {TPM$_2$\\$\mathit{seed}_2, x_2$};
\node[font=\scriptsize, right=0.3cm of t2] (dots) {$\cdots$};
\node[box, fill=blue!10, right=0.3cm of dots] (tn) {TPM$_n$\\$\mathit{seed}_n, x_n$};
\node[bigbox, fill=orange!10, below=0.8cm of t2] (coord) {Coordinator (untrusted)\\Stores: $\{R_i^{(j)}\}$, public only};
\node[bigbox, fill=yellow!15, above=0.6cm of t2] (bc) {Verifier};
\draw[->, thick, blue] (t1) -- node[left,font=\tiny] {$R_1^{(j)}$} (coord);
\draw[->, thick, blue] (t2) -- node[right,font=\tiny] {$R_2^{(j)}$} (coord);
\draw[->, thick, blue] (tn) -- node[right,font=\tiny] {$R_n^{(j)}$} (coord);
\draw[->, thick] (coord) -- +(0,-0.6) node[below,font=\tiny] {$\sigma = (R,s)$} ;
\end{tikzpicture}
\caption{upTPM architecture}
\label{fig:arch}
\end{figure}


A critical design point is that upTPM requires \emph{two} conceptually distinct counters per signer, both maintained inside the TPM:

\begin{itemize}[leftmargin=*,itemsep=2pt]
\item $\mathit{gen}_i$: The \textbf{generation counter}. Highest nonce index ever created by TPM$_i$. Advances during preprocessing. Invariant: commitments have been published for indices $1, \ldots, \mathit{gen}_i$.
\item $\mathit{next}_i$: The \textbf{consumption counter}. Smallest index not yet used for signing. Advances during signing. Invariant: indices $1, \ldots, \mathit{next}_i - 1$ have been consumed and will never be reused.
\end{itemize}

Both counters are monotonically non-decreasing and satisfy $1 \leq \mathit{next}_i \leq \mathit{gen}_i + 1$ at all times. The set of available (preprocessed but unconsumed) indices for signer $i$ is $\{\mathit{next}_i, \ldots, \mathit{gen}_i\}$. We adopt strictly sequential consumption, where the coordinator must assign indices in order, and the TPM only accepts a signing request for index $j = \mathit{next}_i$, after which it increments $\mathit{next}_i$. Both $\mathit{gen}_i$ and $\mathit{next}_i$ are implemented as TPM~2.0 NV counters with $\mathsf{TPMA\_NV\_COUNTER}$ attribute, which the TPM hardware guarantees to be atomic and monotonic even across power loss.

\begin{remark}[Why strictly sequential consumption is sufficient]
\label{rem:sequential}
Sequential consumption allows the one-time-use invariant to be enforced by a single monotonic counter, which is the simplest mechanism available in TPM~2.0 hardware. An alternative design allowing arbitrary-order consumption would require the TPM to maintain a bitmap or hash set of used indices, which consumes more NVRAM.
\end{remark}

\subsection*{Setup}
\label{sec:setup}

\begin{enumerate}[leftmargin=*,itemsep=1pt]
\item \textbf{Input:} Security parameter $1^\lambda$, signer count $n$.
\item For each signer $i = 1, \ldots, n$:
  \begin{enumerate}[label=(\alph*),itemsep=0pt]
    \item Generate key pair inside TPM: $(x_i, pk_i) \gets \mathsf{TPM2\_Create}()$ with $pk_i = g^{x_i}$. The private key $x_i$ is non-exportable.
    \item Generate nonce seed inside TPM: $\mathit{seed}_i \gets \{0,1\}^\lambda$ via $\mathsf{TPM2\_GetRandom}$, stored in NVRAM, bound to TPM's storage hierarchy, never exposed to host software.
    \item Initialize counters: $\mathit{gen}_i \gets 0$, $\mathit{next}_i \gets 1$ (both NV counters).
  \end{enumerate}
\item All signers broadcast public keys.
\item \textbf{Output:} Public parameters $\mathit{pp} = (g, p, \mathbb{G}, n)$; each TPM holds $(\mathit{seed}_i, x_i, \mathit{gen}_i, \mathit{next}_i)$ internally.
\end{enumerate}

\noindent\textbf{Key aggregation} ($\mathsf{KeyAgg}$, public, no secrets needed). For signer set $S$ with sorted key list $L = \mathsf{sort}(\{pk_j\}_{j \in S})$:
\begin{align}
a_i &= H_{\mathrm{agg}}(L \| pk_i) \quad \forall\, i \in S \label{eq:coeff} \\
\mathit{apk} &= \textstyle\prod_{i \in S} pk_i^{a_i} \label{eq:apk}
\end{align}
This is identical to MuSig2's key aggregation~\cite{nick2020musig2}, using a public hash function. Rogue-key resistance follows from the random oracle model (Theorem~\ref{thm:rogue}).

\subsection*{Preprocessing Phase (Offline, Asynchronous)}
\label{sec:preprocess}

Each signer $i$ independently performs preprocessing whenever it is online:

\begin{enumerate}[leftmargin=*,itemsep=1pt]
\item \textbf{Inside the TPM,} for each index $j = \mathit{gen}_i + 1, \ldots, \mathit{gen}_i + B$:
  \begin{enumerate}[label=(\alph*),itemsep=0pt]
    \item Compute nonce scalar with domain separation:
    \begin{equation}\label{eq:nonce}
    w_i^{(j)} = \mathsf{PRF}(\mathit{seed}_i,\; \texttt{"upTPM-nonce"} \| pk_i \| j)
    \end{equation}
    computed inside the TPM via $\mathsf{TPM2\_HMAC}$. The scalar $w_i^{(j)}$ is \textbf{not exported} to host software.
    \item Compute commitment: $R_i^{(j)} = g^{w_i^{(j)}}$. This group element \textbf{is} exported.
    \item (Optional) Produce attestation: $\mathit{att}_i^{(j)} = \mathsf{TPM2\_Quote}(\mathit{AK}_i,\; H(R_i^{(j)} \| pk_i \| j))$.
  \end{enumerate}
\item Advance generation counter: $\mathit{gen}_i \gets \mathit{gen}_i + B$ via $\mathsf{TPM2\_NV\_Increment}$.
\item \textbf{Publish to coordinator:} Send $\{(R_i^{(j)}, j, \mathit{att}_i^{(j)})\}_{j}$.
\end{enumerate}

\noindent\textbf{Critical security property:} $w_i^{(j)}$ is computed and consumed entirely inside the TPM. The host and coordinator receive only $R_i^{(j)} = g^{w_i^{(j)}}$.

\begin{remark}[TPM command mapping] 
The preprocessing computation requires evaluating $\mathsf{PRF}(\mathit{seed}_i, \cdot)$ and computing $g^{w_i^{(j)}}$ without exporting $w_i^{(j)}$, such that the nonce derivation and commitment generation are bound to TPM-internal state. In commodity TPM~2.0, this binding is not exposed as a single native command for arbitrary PRF-derived scalar multiplication. The \textsf{TPM2\_Commit} and \textsf{TPM2\_Sign} commands provide a partial realization, 
though without caller-controlled seed-based derivation. Accordingly, our protocol treats this as an instantiation layer: the security analysis applies to the protocol abstraction. The deployment requires either the native TPM commit workflow, vendor-specific protected execution support, or a TEE-assisted realization that preserves non-exportability of the nonce scalar.
\end{remark}

\noindent\textbf{Storage:} Each TPM stores $\mathit{seed}_i$ (32\,B) + $x_i$ (32\,B) + $\mathit{gen}_i$ (8\,B) + $\mathit{next}_i$ (8\,B) = \textbf{80 bytes total}, independent of the number of preprocessed commitments.

\subsection*{Online Signing (Single Round)}
\label{sec:signing}

Given message $M$ and signer set $S$:

\begin{enumerate}[leftmargin=*,itemsep=1pt]
\item \textbf{Coordinator:}
  \begin{enumerate}[label=(\alph*),itemsep=0pt]
    \item For each $i \in S$, select $j_i = \mathit{next}_i$ (coordinator's record of signer $i$'s next unconsumed index) and retrieve $R_i^{(j_i)}$ from storage. If no commitment is available, abort and request preprocessing.
    \item (If attested) Verify $\mathit{att}_i^{(j_i)}$ for each signer.
    \item Compute $R = \prod_{i \in S} R_i^{(j_i)}$.
    \item Send $(M, R, \{(R_k^{(j_k)}, j_k)\}_{k \in S})$ to each signer.
  \end{enumerate}

\item \textbf{Each signer $i \in S$ (inside TPM):}
  \begin{enumerate}[label=(\alph*),itemsep=0pt]
    \item \textbf{Check index:} Verify $j_i = \mathit{next}_i$. If $j_i \neq \mathit{next}_i$, \textbf{abort}.
    \item Recompute $w_i^{(j_i)} = \mathsf{PRF}(\mathit{seed}_i, \texttt{"upTPM-nonce"} \| pk_i \| j_i)$ inside TPM.
    \item Recompute $R_i^{(j_i)} = g^{w_i^{(j_i)}}$ and verify it matches the received value. If mismatch, \textbf{abort}.
    \item Verify $R = \prod_{k \in S} R_k^{(j_k)}$. If mismatch, \textbf{abort}.
    \item Compute $(\mathit{apk}, \{a_k\}) \gets \mathsf{KeyAgg}(\{pk_k\}_{k \in S})$.
    \item Compute $c = H(R \| \mathit{apk} \| M)$.
    \item Compute $s_i = w_i^{(j_i)} + c \cdot a_i \cdot x_i \bmod p$.
    \item \textbf{Advance consumption counter:} $\mathit{next}_i \gets \mathit{next}_i + 1$ via $\mathsf{TPM2\_NV\_Increment}$.
    \item Output $s_i$ to host software.
  \end{enumerate}

\item \textbf{Aggregation:} Collect $\{s_i\}_{i \in S}$; optionally verify $g^{s_i} \stackrel{?}{=} R_i^{(j_i)} \cdot pk_i^{a_i \cdot c}$; compute $s = \sum_{i \in S} s_i \bmod p$.
\item \textbf{Output:} $\sigma = (R, s)$.
\end{enumerate}

\noindent\textbf{One-time-use invariant.} After step 2(h), the TPM has irrevocably incremented $\mathit{next}_i$, so index $j_i$ can never be accepted again.

\subsection*{Verification}

Given $(M, \sigma = (R, s), \{pk_i\}_{i \in S})$: compute $(\mathit{apk}, \{a_i\}) \gets \mathsf{KeyAgg}$, compute $c = H(R \| \mathit{apk} \| M)$, accept iff $g^s = R \cdot \mathit{apk}^c$. 

\subsection{Coordinator State and Crash Recovery}
\label{sec:recovery}

The coordinator maintains a database of published commitments $\{(i, j, R_i^{(j)}, \mathit{att}_i^{(j)}, \mathit{used})\}$ and its local record of each signer's next unconsumed index. This is \emph{public, non-secret state}: losing it does not compromise any private key or nonce scalar. We consider the following three scenarios:

\begin{itemize}
    \item Coordinator crash: If the coordinator loses state, commitments can be re-published by signers from their seeds (the TPM can regenerate $R_i^{(j)}$ for any $j \leq \mathit{gen}_i$). The coordinator recovers by querying each signer's TPM for its current $\mathit{gen}_i$ and $\mathit{next}_i$ values and requesting re-publication of commitments in the range $[\mathit{next}_i, \mathit{gen}_i]$.
    
    \item Coordinator--TPM desynchronization: If the coordinator's local record of $\mathit{next}_i$ falls behind the TPM's actual $\mathit{next}_i$ (e.g., because the coordinator recorded a signing session as incomplete but the TPM already incremented), the coordinator may attempt to use an already-consumed index. The TPM will reject this with an abort (step~2a), at which point the coordinator queries the TPM for its current $\mathit{next}_i$ and resynchronizes. No security property is violated; at worst, one signing attempt fails and is retried.
    
    \item Forward progress guarantee: As long as the TPM is functional and has $\mathit{next}_i \leq \mathit{gen}_i$, the protocol can always make progress--the coordinator queries $\mathit{next}_i$, selects the corresponding commitment, and proceeds.
\end{itemize}

\section{\uppercase{upTPM Threshold Signature Extension}}
\label{sec:threshold}

We extend upTPM to $(t,n)$-threshold signatures using Shamir \cite{shamir1979share} secret sharing with Feldman VSS \cite{feldman1987practical}. Our proposed threshold extension demonstrates that the unbounded preprocessing approach generalizes. 
The protocol is extended as follows:

\subsection*{Distributed Key Generation}

\begin{enumerate}[leftmargin=*,itemsep=1pt]
\item Each party $i \in \{1,\ldots,n\}$ chooses a random degree-$(t{-}1)$ polynomial $\phi_i(X) = \phi_{i,0} + \phi_{i,1}X + \cdots + \phi_{i,t-1}X^{t-1}$ over $\mathbb{Z}_p$.
\item Party $i$ computes Feldman commitments $C_{i,k} = g^{\phi_{i,k}}$ for $k = 0, \ldots, t{-}1$ and broadcasts these to all parties.
\item For each $j \neq i$, party $i$ sends the share $\phi_i(j)$ to party $j$ over a secure channel.
\item Party $j$ verifies each received share: $g^{\phi_i(j)} \stackrel{?}{=} \prod_{k=0}^{t-1} C_{i,k}^{j^k}$. If verification fails, party $j$ broadcasts a complaint and the DKG restarts.
\item After successful verification, party $j$ computes its combined share $x_j = \sum_{i=1}^n \phi_i(j)$ and stores $x_j$ inside TPM$_j$ (non-exportable). The public verification share is $Y_j = g^{x_j} = \prod_{i=1}^n \prod_{k=0}^{t-1} C_{i,k}^{j^k}$ (publicly computable).
\item Master public key: $\mathit{mpk} = \prod_{i=1}^n C_{i,0} = g^{\sum_i \phi_{i,0}} = g^x$ where $x = \sum_i \phi_{i,0}$.
\item Each TPM$_j$ generates its nonce seed $\mathit{seed}_j$ and initializes $\mathit{gen}_j = 0$, $\mathit{next}_j = 1$, exactly as in the multi-signature setup.
\end{enumerate}

\subsection*{Preprocessing}

Identical to Section~\ref{sec:preprocess}: each TPM$_j$ independently generates commitments $R_j^{(k)} = g^{w_j^{(k)}}$ where $w_j^{(k)} = \mathsf{PRF}(\mathit{seed}_j, \texttt{"upTPM-nonce"} \| pk_j \| k)$, and publishes them to the coordinator.

\subsection*{Threshold Signing}

Given message $M$ and threshold set $T$ with $|T| \geq t+1$:

\begin{enumerate}[leftmargin=*,itemsep=1pt]
\item \textbf{Coordinator:} For each $i \in T$, retrieve next unconsumed commitment $R_i^{(j_i)}$. Compute Lagrange coefficients:
\begin{equation}\label{eq:lagrange}
\lambda_i = \prod_{k \in T, k \neq i} \frac{-k}{i-k} \bmod p
\end{equation}
Compute weighted commitment $R = \prod_{i \in T} (R_i^{(j_i)})^{\lambda_i}$. Send $(M, R, \{(R_k^{(j_k)}, j_k, \lambda_k)\}_{k \in T})$ to all signers in $T$.

\item \textbf{Each signer $i \in T$ (inside TPM):}
  \begin{enumerate}[label=(\alph*),itemsep=0pt]
    \item Verify $j_i = \mathit{next}_i$.
    \item Recompute $w_i^{(j_i)}$ and $R_i^{(j_i)}$; verify consistency.
    \item Verify Lagrange coefficients $\lambda_i$ by recomputing from $T$.
    \item Verify $R = \prod_{k \in T} (R_k^{(j_k)})^{\lambda_k}$.
    \item Compute $c = H(R \| \mathit{mpk} \| M)$.
    \item Compute $s_i = \lambda_i \cdot w_i^{(j_i)} + c \cdot \lambda_i \cdot x_i \bmod p$.
    \item Advance $\mathit{next}_i$.
    \item Output $s_i$.
  \end{enumerate}

\item \textbf{Reconstruction:} $s = \sum_{i \in T} s_i \bmod p$.
\item \textbf{Output:} $\sigma = (R, s)$.
\end{enumerate}

\subsection*{Threshold Verification}

Given $(M, \sigma = (R,s), \mathit{mpk})$: accept iff $g^s = R \cdot \mathit{mpk}^c$ where $c = H(R \| \mathit{mpk} \| M)$.

\subsection*{Coordinator Trust Model}
\label{sec:trust}

The coordinator is untrusted for unforgeability. It stores public commitments $\{R_i^{(j)}\}$, public keys, attestation quotes, and usage metadata.The coordinator receives all partial signatures $\{s_i\}_{i \in S}$ (from both honest and corrupted signers) in order to compute the aggregate $s = \sum s_i$. However, in the EU-CMA security model (Definition~3), the 
adversary is given only the aggregate signature $\sigma = (R,s)$  as oracle output; individual honest signers' shares are not 
separately exposed. This model's deployments were honest signers 
send their shares over authenticated private channels to the 
aggregation point. We consider the following scenarios:

\begin{itemize}
    \item Cannot forge: $w_i^{(j)} = \log_g(R_i^{(j)})$ requires solving DL. Cannot extract $x_i$ from $s_i = w_i^{(j)} + c \cdot a_i \cdot x_i$ without $w_i^{(j)}$, which it does not know.
    
    \item Cannot cause nonce reuse: TPM's $\mathit{next}_i$ advances monotonically.
    
    \item Can deny service: Mitigated by replicating coordinator state (which is entirely public).
    
    \item Can substitute commitments: Detected by signer verification (step~2c--2d).
    
    \item State loss: Coordinator state is non-secret. Recovery is described in Section~\ref{sec:recovery}.
\end{itemize}

\section{\uppercase{Security Analysis}}
\label{sec:security}
We emphasize that upTPM does not use message-deterministic nonce 
generation (e.g., RFC~6979-style derandomization). Instead, each 
signer derives a pseudorandom per-index nonce scalar from a 
TPM-sealed seed via $w_i^{(j)} = \mathsf{PRF}(\mathit{seed}_i, 
\texttt{"upTPM-nonce"} \| pk_i \| j)$, and the TPM enforces strict 
one-time use via a monotone consumption counter. This prevents the 
classic key-recovery attack arising from nonce reuse across distinct 
challenges. Our construction is consistent with BIP~340~\cite{bip340} and RFC~9591~\cite{rfc9591} warnings that naive deterministic nonce derivation is unsafe in multiparty Schnorr protocols.

\subsection{Correctness}

\begin{theorem}[Multi-signature Correctness]
\label{thm:correct}
If all parties follow the protocol honestly, verification accepts.
\end{theorem}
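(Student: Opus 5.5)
The plan is to verify the equation $g^s = R \cdot \mathit{apk}^c$ directly, by expanding both sides in terms of the per-signer quantities. The first step is to confirm that every signer $i \in S$ gets past its abort checks in step~2. Under honest execution, the coordinator picks $j_i = \mathit{next}_i$, and this agrees with the TPM's counter, so check~2(a) succeeds. The received commitment $R_i^{(j_i)}$ is the one the TPM published during preprocessing. Because the PRF is deterministic in $(\mathit{seed}_i, \texttt{"upTPM-nonce"} \| pk_i \| j_i)$, recomputing it gives the same $w_i^{(j_i)}$, and therefore the same $g^{w_i^{(j_i)}}$, so check~2(c) succeeds. Since the coordinator computed $R = \prod_{k\in S} R_k^{(j_k)}$, check~2(d) succeeds as well.

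The second step is to argue that the signers and the verifier all use the same challenge. $\mathsf{KeyAgg}$ depends only on the public keys, through $L = \mathsf{sort}(\{pk_j\}_{j\in S})$ and the hash $H_{\mathrm{agg}}$, which is a fixed function. Every signer and the verifier therefore obtain identical values of $a_i$ and $\mathit{apk}$. Each signer receives the same $R$ and $M$, so all of them, and the verifier, compute the same $c = H(R \| \mathit{apk} \| M)$.

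The final step is the computation itself. Write $w_i$ for $w_i^{(j_i)}$. Then
\[
s = \sum_{i\in S} s_i = \sum_{i \in S} w_i + c\sum_{i\in S} a_i x_i \pmod p .
\]
Because $g$ has order $p$, exponent arithmetic modulo $p$ is consistent, and so
\[
g^s = \prod_{i\in S} g^{w_i} \cdot \Bigl(\prod_{i\in S} (g^{x_i})^{a_i}\Bigr)^c = R \cdot \mathit{apk}^c .
\]
Hence verification accepts.

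I expect no step to be a real obstacle. The only subtle point is the determinism/consistency argument in the first step: the nonce recomputed at signing time must be exactly the one whose commitment was published. This holds because both computations evaluate the same PRF on the same input, and sequential indexing guarantees that the coordinator's $j_i$ is a preprocessed index with $j_i \le \mathit{gen}_i$.
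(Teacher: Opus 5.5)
Your proposal is correct and takes essentially the same approach as the paper, whose proof is the direct expansion $s = \sum_{i\in S} w_i^{(j_i)} + c\sum_{i\in S} a_i x_i$ combined with $R = g^{\sum_{i\in S} w_i^{(j_i)}}$ and $\mathit{apk} = g^{\sum_{i\in S} a_i x_i}$. Your preliminary checks, that no honest signer aborts at steps~2(a), 2(c), 2(d) and that all parties derive the same challenge $c$, are left implicit in the paper, so they make your version more complete.
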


\begin{proof}
By construction: $s = \sum_{i \in S} s_i = \sum_{i \in S} (w_i^{(j_i)} + c \cdot a_i \cdot x_i)$. Separating: $s = \sum_{i \in S} w_i^{(j_i)} + c \cdot \sum_{i \in S} a_i x_i$.

The aggregated commitment satisfies $R = \prod_{i \in S} g^{w_i^{(j_i)}} = g^{\sum_{i \in S} w_i^{(j_i)}}$, so $\sum_{i \in S} w_i^{(j_i)} = \log_g(R)$.

The aggregated public key satisfies $\mathit{apk} = \prod_{i \in S} g^{a_i x_i} = g^{\sum_{i \in S} a_i x_i}$, so $\sum_{i \in S} a_i x_i = \log_g(\mathit{apk})$.

Therefore $g^s = g^{\log_g(R) + c \cdot \log_g(\mathit{apk})} = R \cdot \mathit{apk}^c$. \qedhere
\end{proof}

\begin{theorem}[Threshold-signature Correctness]
For honest execution with $|T| \geq t+1$, verification accepts.
\end{theorem}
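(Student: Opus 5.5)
The plan mirrors the proof of Theorem~\ref{thm:correct}: expand the aggregate response $s$, split it into a nonce part and a key part, and show that $g$ raised to each part equals the corresponding factor of $R \cdot \mathit{mpk}^c$. The extra ingredient compared with the multi-signature case is Lagrange interpolation at $0$. It replaces the direct identity $\mathit{apk} = g^{\sum a_i x_i}$.

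\textbf{Step 1 (expand $s$).} By step 2(f) of threshold signing,
\[
s = \sum_{i \in T} s_i = \sum_{i \in T} \lambda_i w_i^{(j_i)} + c \sum_{i \in T} \lambda_i x_i \bmod p .
\]
Here $c = H(R \,\|\, \mathit{mpk} \,\|\, M)$ is the same value at every signer, because each signer checks the same $R$ in step 2(d) and uses the public $\mathit{mpk}$. It is also the same value the verifier recomputes.

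\textbf{Step 2 (nonce part).} Honest signers' recomputed commitments match the published ones, so $R_i^{(j_i)} = g^{w_i^{(j_i)}}$. Hence
\[
R = \prod_{i \in T} (R_i^{(j_i)})^{\lambda_i} = g^{\sum_{i \in T} \lambda_i w_i^{(j_i)}} .
\]

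\textbf{Step 3 (key part; the main obstacle).} Define $F(X) = \sum_{i=1}^n \phi_i(X)$. This polynomial has degree at most $t-1$ over $\mathbb{Z}_p$, with $F(0) = \sum_i \phi_{i,0} = x$ and $F(j) = x_j$ for every party $j$ by the DKG. The indices in $T \subseteq \{1,\ldots,n\}$ are distinct and nonzero mod $p$, assuming $n < p$, so the coefficients $\lambda_i$ from Eq.~(\ref{eq:lagrange}) are well defined. Since $|T| \geq t+1 > \deg F$, Lagrange interpolation on the point set $T$ reconstructs $F$ exactly. Evaluating at $X=0$ gives $\sum_{i \in T} \lambda_i F(i) = F(0)$, that is, $\sum_{i \in T} \lambda_i x_i = x$.

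The care needed here is to verify the Lagrange basis identity $\sum_{i\in T}\lambda_i\, q(i) = q(0)$ for every polynomial $q$ with $\deg q < |T|$. This follows from uniqueness of the interpolating polynomial of degree $<|T|$ through $|T|$ distinct points, with $\lambda_i$ being the $i$-th basis polynomial evaluated at $0$. I would also note that the verification-share identity $Y_j = g^{F(j)}$ is consistent, though it is not needed for correctness.

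\textbf{Step 4 (conclude).} Combining the steps,
\[
g^s = g^{\sum_{i\in T} \lambda_i w_i^{(j_i)}} \cdot \bigl(g^{x}\bigr)^c = R \cdot \mathit{mpk}^c ,
\]
because $\mathit{mpk} = \prod_i C_{i,0} = g^x$. This is exactly the threshold verification equation.
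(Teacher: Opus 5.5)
Your proposal is correct and follows the paper's proof essentially step for step: expand $s$, identify $\log_g R = \sum_{i\in T}\lambda_i w_i^{(j_i)}$, use Lagrange interpolation at $0$ to get $\sum_{i\in T}\lambda_i x_i = x$, and conclude $g^s = R\cdot \mathit{mpk}^c$. The paper states the interpolation step in one line; your explicit polynomial $F=\sum_i \phi_i$, the degree check $\deg F \le t-1 < |T|$, and the assumption $n<p$ simply fill in that step.
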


\begin{proof}
$s = \sum_{i \in T} (\lambda_i w_i^{(j_i)} + c \lambda_i x_i) = \sum_{i \in T} \lambda_i w_i^{(j_i)} + c \sum_{i \in T} \lambda_i x_i$.

$R = \prod_{i \in T} (g^{w_i^{(j_i)}})^{\lambda_i} = g^{\sum \lambda_i w_i^{(j_i)}}$, so $\log_g(R) = \sum \lambda_i w_i^{(j_i)}$.

By Lagrange interpolation: $\sum_{i \in T} \lambda_i x_i = x$ and $\mathit{mpk} = g^x$.

Therefore $g^s = g^{\log_g(R) + c \cdot x} = R \cdot \mathit{mpk}^c$.
\end{proof}

\subsection{EU-CMA Security}

We prove EU-CMA security of upTPM in the random oracle model under the discrete logarithm (DL) assumption and PRF security, assuming the one-time-use invariant enforced by the TPM consumption counter $\text{next}_i$. We consider the natural aggregate-signature view of the protocol: the adversary sees the public commitments $R_i^{(j)}$, public keys, coordinator messages,
and final aggregate signatures, may adaptively corrupt signers to learn their seeds, and may control the coordinator. For honest signers, however, nonce scalars remain inside the TPM, and honest partial signatures are not exposed as an oracle output before aggregation.

\begin{theorem}[Multi-signature EU-CMA Security]
\label{thm:eucma}
upTPM is EU-CMA secure in the random oracle model under the DL assumption and PRF security, assuming the one-time-use invariant enforced by TPM hardware counters. Specifically, for any PPT adversary $\mathcal{A}$ in the experiment of Definition~3, there exist PPT algorithms $\mathcal{B}$
and $\mathcal{C}$ such that
\[
\Pr[\mathrm{Exp}^{\mathrm{eu\text{-}cma}}_{\mathcal{A}} = 1]
\le
n \cdot \left(
\mathrm{Adv}^{\mathrm{dl}}_{\mathcal{B}}(\lambda)
+ \frac{q_S (q_H + q_S)}{p}
\right)
+ n \cdot \mathrm{Adv}^{\mathrm{prf}}_{\mathcal{C}}(\lambda)
+ \mathrm{negl}(\lambda),
\]
where $q_H$ is the number of random-oracle queries, $q_S$ is the number of signing queries,
$n$ is the number of honest signers, and $p$ is the order of the group.
\end{theorem}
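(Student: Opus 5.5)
The plan is a sequence of game hops, starting from $\mathsf{Exp}^{\mathrm{eu\text{-}cma}}_\mathcal{A}$ (Game~0) and ending in a reduction to DL.

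\textbf{Game~1: guess and replace the PRF.} The reduction first guesses an index $i^\ast$ among the $n$ honest signers, hoping $i^\ast \in S^\ast$ and that $i^\ast$ is never corrupted. Because $S^\ast$ must contain an honest signer, the guess is right with probability at least $1/n$; this is the source of the leading factor $n$. For every signer that stays honest, we replace $\mathsf{PRF}(\mathit{seed}_i,\cdot)$ by a truly random function. A standard hybrid over the honest signers, each step invoking Definition~\ref{def:prf} through a distinguisher $\mathcal{C}$, costs $n\cdot\mathrm{Adv}^{\mathrm{prf}}_{\mathcal{C}}(\lambda)$. Corrupted signers keep their real seeds, and $\mathcal{C}$ can still answer $\mathcal{O}_{\mathrm{Corrupt}}$ for them. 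Adaptive corruption of the signer currently being hybridized is handled by aborting on a wrong guess, which is absorbed by the factor $n$.

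After this step, the nonces $w_{i^\ast}^{(j)}$ are independent and uniform. The one-time-use invariant (the monotone counter $\mathit{next}_{i^\ast}$, together with the abort in step~2a) guarantees that each $w_{i^\ast}^{(j)}$ enters at most one partial signature. No linear relation of the form ``two $s$ values with the same nonce'' can therefore arise.

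\textbf{Game~2: simulate signing without $x_{i^\ast}$.} We embed the DL challenge $Y$ as $pk_{i^\ast}=Y$ and keep the other honest keys known. We then program $H$ so that signing queries can be answered without $x_{i^\ast}$. For the $j$-th index of $i^\ast$, the simulator samples $c_j$ and $s'_j$ uniformly and sets $R_{i^\ast}^{(j)} = g^{s'_j}Y^{-c_j a_{i^\ast}}$ at preprocessing time. This distribution is identical to Game~1, since $R_{i^\ast}^{(j)}$ is uniform.

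The difficulty is that $c_j$ must later equal $H(R\|\mathit{apk}\|M)$, and $R$ depends on corrupted signers' commitments and on $M$, which are fixed only at signing time. Here the structure of the model is essential. Commitments of corrupted signers are published before use, and the coordinator's assignment is forced by sequential indices, so $R$ and $\mathit{apk}$ are determined once $M$ arrives. The simulator then programs $H(R\|\mathit{apk}\|M):=c_j$ on the fly. This fails only if that point was already queried or programmed. Each signing query fails with probability at most $(q_H+q_S)/p$, since $R_{i^\ast}^{(j)}$ is uniform and hidden until published. A union bound over the $q_S$ signing queries gives the term $q_S(q_H+q_S)/p$.

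I expect this step to be the main obstacle. Corrupted signers may publish commitments after seeing $R_{i^\ast}^{(j)}$, and this is exactly the Drijvers-type concurrency issue in two-round schemes. To handle it I would lean on the remark that only the aggregate $\sigma$ is output, and that partial signatures travel over authenticated channels. As a result, the simulator only needs to produce a valid aggregate, and it can compute the corrupted parties' contributions itself, because their seeds are known after $\mathcal{O}_{\mathrm{Corrupt}}$. If $M$ is chosen adaptively after publication, I would instead program $H$ lazily with fresh $c$ and adjust, accepting the stated loss.

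\textbf{Final step: extract $x_{i^\ast}$.} Given a forgery on $M^\ast\notin\mathcal{Q}$, we apply the forking lemma on the query $H(R^\ast\|\mathit{apk}^\ast\|M^\ast)$. This gives two forgeries with the same $R^\ast$ and $c\neq c'$. Hence $\log_g \mathit{apk}^\ast=(s-s')/(c-c')$. Rogue-key resistance from Theorem~\ref{thm:rogue} (a second forking on $H_{\mathrm{agg}}(L\|pk_{i^\ast})$) isolates $a_{i^\ast}x_{i^\ast}$, since the other discrete logs are known or cancel across the forks. This yields the DL solver $\mathcal{B}$. The forking losses (square root and the $q_H/p$ terms) are folded into $\mathrm{Adv}^{\mathrm{dl}}_{\mathcal{B}}$ and $\mathrm{negl}(\lambda)$ as in the statement.
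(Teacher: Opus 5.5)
The gap is in your Game~2, at the step you flagged yourself. You claim each signing query's programming fails with probability at most $(q_H+q_S)/p$ ``since $R_{i^\ast}^{(j)}$ is uniform and hidden until published''. That does not hold. In upTPM, publication happens in $\mathcal{O}_{\mathrm{Preprocess}}$, strictly before the signing query, and the coordinator's choice $j_i=\mathit{next}_i$ is deterministic and public. So for any $S$ the adversary knows $R$ and $\mathit{apk}$ of its next session before it chooses $M$. An adversary that queries $H(R\|\mathit{apk}\|M)$ and only then calls $\mathcal{O}_{\mathrm{Sign}}(M,S)$ triggers your bad event with probability $1$.

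Once you have published $R_{i^\ast}^{(j)}=g^{s'_j}Y^{-e_j}$, the only challenge you can answer without $x_{i^\ast}$ is $c=e_j a_{i^\ast}^{-1}$. (You also cannot put $a_{i^\ast}$ in that exponent at preprocessing time, since $S$, and hence $L$, is not yet known.) So ``program $H$ lazily with fresh $c$ and adjust'' has nothing left to adjust. The authenticated-channel remark does not help either. The aggregate must still satisfy $g^s=R\cdot\mathit{apk}^c$, and $\mathit{apk}$ contains $Y^{a_{i^\ast}}$, so you would need $\log_g$ of $Y^{c\,a_{i^\ast}-e_j}$.

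No better simulator can close this, because the statement is false for the scheme as specified. Nonces are fixed before messages are chosen, which is exactly the setting of the ROS attack of Benhamouda et al. The attack runs as follows.
\begin{enumerate}[label=(\arabic*),itemsep=0pt]
\item Fix $S=\{i\}$ with $i$ honest. Any fixed $S$ works, with $R_j$ the aggregate nonce of the $j$-th session.
\item Obtain $\ell\ge\lceil\log_2 p\rceil$ commitments $R_1,\dots,R_\ell$ via $\mathcal{O}_{\mathrm{Preprocess}}$.
\item For each $j$, pick messages $m_j^0,m_j^1$ with $c_j^b=H(R_j\|\mathit{apk}\|m_j^b)$ and $c_j^0\neq c_j^1$.
\item Set $\alpha_j=2^{j-1}/(c_j^1-c_j^0)$ and $R^\ast=\prod_j R_j^{\alpha_j}$, and let $c^\ast=H(R^\ast\|\mathit{apk}\|M^\ast)$ for a fresh $M^\ast$.
\item Choose bits $b_j$ with $\sum_j b_j2^{j-1}\equiv c^\ast-\sum_j\alpha_j c_j^0 \pmod p$.
\item Query $\mathcal{O}_{\mathrm{Sign}}(m_j^{b_j},S)$ in index order and output $s^\ast=\sum_j\alpha_j s_j$.
\end{enumerate}
This yields $g^{s^\ast}=R^\ast\cdot\mathit{apk}^{c^\ast}$ in polynomial time. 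Every nonce is used exactly once, so the monotone counter does not prevent it.

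The paper's proof asserts the same $q_S(q_H+q_S)/p$ term with the same reasoning (``each signing query uses a fresh aggregate commitment $R$''), so it does not supply the missing idea either. A provable statement needs one of two changes. Either bind the nonce to the message (FROST-style binding factors, or MuSig2-style multiple nonces combined with a message-dependent coefficient), or use a model in which commitments are revealed only after $M$ is fixed.
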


\begin{proof}
We proceed by a sequence of games.

\textbf{Game $G_0$.}
This is the real EU-CMA experiment of Definition~3. Let $\epsilon_0$ denote the adversary’s
success probability.

\textbf{Game $G_1$.}
For each honest signer $i$, replace $\mathrm{PRF}(\text{seed}_i, \cdot)$ with an independent uniformly random
function $f_i(\cdot) : \{0,1\}^* \to \mathbb{Z}_p$. By a standard hybrid argument over the honest signers,
\[
|\Pr[G_0] - \Pr[G_1]| \le n \cdot \mathrm{Adv}^{\mathrm{prf}}_{\mathcal{C}}(\lambda).
\]

In Game $G_1$, every honest nonce scalar $w_i^{(j)}$ is independent and uniform in $\mathbb{Z}_p$,
hence every commitment $R_i^{(j)} = g^{w_i^{(j)}}$ is a uniform group element. Because the TPM enforces
one-time use through the strictly monotone counter $\text{next}_i$, each honest $w_i^{(j)}$ appears in at most
one signing equation.

\textbf{Game $G_2$.}
Replace $H$ and $H_{\mathrm{agg}}$ with programmable random oracles. This is only a syntactic
reformulation of the random oracle model, so the adversary’s success probability is unchanged.

\textbf{Reduction to DL.}
We now construct a DL solver $\mathcal{B}$ using a successful forger in Game $G_2$. The solver receives
a challenge $(g, Y = g^y)$ and must recover $y$.

$\mathcal{B}$ guesses one honest signer $j^*$ that appears in the eventual forgery. This guess is correct
with probability at least $1/n$. For every other honest signer $i \neq j^*$, $\mathcal{B}$ chooses
$x_i \leftarrow \mathbb{Z}_p$ and sets $pk_i = g^{x_i}$.

For the target signer $j^*$, $\mathcal{B}$ embeds the DL challenge into the public key as follows.
When $H_{\mathrm{agg}}$ is first queried on a key list $L$ containing $pk_{j^*}$, $\mathcal{B}$ programs
\[
a_{j^*} = H_{\mathrm{agg}}(L \parallel pk_{j^*})
\]
as a fresh nonzero value and sets
\[
pk_{j^*} = Y^{a_{j^*}^{-1}}.
\]

Then
\[
pk_{j^*}^{a_{j^*}} = Y = g^y,
\]
so the target weighted secret contribution satisfies $a_{j^*} x_{j^*} = y$.

In Game $G_1$, preprocessing commitments for honest signers are uniformly random group elements.
Therefore $\mathcal{B}$ may simulate preprocessing for the target signer by choosing independent
uniform exponents $\tilde{w}_{j^*}^{(j)} \leftarrow \mathbb{Z}_p$ and publishing
\[
\tilde{R}_{j^*}^{(j)} = g^{\tilde{w}_{j^*}^{(j)}}.
\]

This is identically distributed to the real game after the PRF replacement. All other honest signers
are simulated honestly.

Signing queries are answered consistently using fresh preprocessed commitments and random-oracle
programming. The only bad event is that the adversary queries
$H(R \parallel apk \parallel M)$ before the simulator programs that point for a signing query.
Since each signing query uses a fresh aggregate commitment $R$ by one-time use, the probability of
this bad event over all signing queries is at most
\[
\frac{q_S (q_H + q_S)}{p}.
\]

Conditioned on no such collision, the adversary’s view is identically distributed to Game $G_2$.

Suppose now that the adversary outputs a valid forgery $(M^*, S^*, \sigma^* = (R^*, s^*))$
such that $M^*$ was not previously submitted to $O_{\mathrm{Sign}}$ and $S^*$ contains the target
signer $j^*$. Let
\[
c^* = H(R^* \parallel apk^* \parallel M^*), \quad
apk^* = \prod_{i \in S^*} pk_i^{a_i}.
\]

By validity of the forgery,
\[
g^{s^*} = R^* \cdot (apk^*)^{c^*}.
\]

By the general forking lemma, rewinding on the random-oracle query
$H(R^* \parallel apk^* \parallel M^*)$ yields, with non-negligible probability,
a second valid forgery $(R^*, s')$ on the same input but with challenge $c' \neq c^*$:
\[
g^{s'} = R^* \cdot (apk^*)^{c'}.
\]

Dividing the two equations gives
\[
g^{s^* - s'} = (apk^*)^{c^* - c'},
\]
and hence
\[
\log_g(apk^*) = (s^* - s')(c^* - c')^{-1} \pmod p.
\]

Since
\[
\log_g(apk^*) = \sum_{i \in S^*} a_i x_i
= a_{j^*} x_{j^*} + \sum_{i \in S^* \setminus \{j^*\}} a_i x_i,
\]
and $\mathcal{B}$ knows $x_i$ for every honest signer $i \neq j^*$ as well as all public contributions
of corrupted signers, it can recover
\[
a_{j^*} x_{j^*}
= \log_g(apk^*) - \sum_{i \in S^* \setminus \{j^*\}} a_i x_i.
\]

By construction, $a_{j^*} x_{j^*} = y$, so $\mathcal{B}$ solves the DL challenge.

Combining the loss from the target-signer guess, the PRF hybrid bound, and the random-oracle
collision term yields the claimed inequality.
\end{proof}

\begin{theorem}[Threshold EU-CMA Security]
\label{thm:thresheucma}
Under the DL assumption, PRF security, the one-time-use invariant, and the ROM, the $(t,n)$-threshold scheme is EU-CMA secure against any adversary corrupting up to $t$ signers, with an untrusted coordinator.
\end{theorem}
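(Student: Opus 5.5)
We prove Theorem~\ref{thm:thresheucma} by reduction to Theorem~\ref{thm:eucma}, reusing its game sequence and changing only how the key material is embedded. $G_0$ is the threshold EU-CMA experiment. In $G_1$, each honest signer's $\mathsf{PRF}(\mathit{seed}_i,\cdot)$ is replaced by an independent random function, at a hybrid loss of at most $n\cdot\mathrm{Adv}^{\mathrm{prf}}$. In $G_2$, $H$ is made programmable. After $G_1$, every honest $w_i^{(j)}$ is uniform and, by the one-time-use invariant of $\mathit{next}_i$, enters at most one signing equation. Because signers recompute $\lambda_i$ from $T$ and check $R$ (steps 2(c)--2(d)), the untrusted coordinator cannot make an honest share enter an equation with a commitment of its choosing.

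The reduction $\mathcal{B}$ receives a DL challenge $Y=g^y$ and sets $\mathit{mpk}=Y$ by simulating the Feldman DKG in the standard way (Gennaro et al.). Let $\mathcal{C}$ be the corrupted set, with $|\mathcal{C}|\le t$. The key step is which polynomial to embed in. Write $\phi_h$ for the polynomial of one honest dealer $h$. $\mathcal{B}$ fixes the values $\phi_h(k)$ for $k\in\mathcal{C}$ at random. It then sets $g^{\phi_h(0)}$ so that $\prod_i C_{i,0}=Y$ once the corrupted dealers' commitments are known. The remaining Feldman commitments $C_{h,k}$ are obtained by interpolation in the exponent.

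This step is the first obstacle. The plain Feldman DKG is rushing-biasable, so $\mathcal{B}$ cannot always force $\mathit{mpk}=Y$ exactly. I would first try to allow $\mathit{mpk}=Y\cdot g^{\delta}$ with $\delta$ known to $\mathcal{B}$, which costs nothing in the reduction. If the rushing adversary prevents even that, I would fall back to assuming a secure DKG whose key is simulatable to an externally given value.

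\textbf{Caveat on the threshold.} If the adversary may corrupt $t$ signers, it holds $t$ shares. But the scheme uses degree-$(t-1)$ polynomials, and then $t$ shares already determine $x$. For the reduction to make sense we must read the statement as allowing at most $t-1$ corruptions, consistent with signing sets $|T|\ge t$. I would state this correction explicitly.

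All honest verification shares $Y_i$ are then computable in the exponent from $Y$ and the $t-1$ corrupted shares. Signing queries are simulated as in a Schnorr simulation, with one subtlety. For a query on set $T$, $\mathcal{B}$ picks $c$ and the honest partial responses $s_i$ at random and sets each honest commitment $R_i^{(j_i)}=g^{s_i/\lambda_i}Y_i^{-c}$. But this commitment must be published at preprocessing time, before $T$ and $M$ are known. This mismatch is the main obstacle, and it is where the threshold proof genuinely departs from Theorem~\ref{thm:eucma}.

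I would try two resolutions. The first relies on the model exposing only the aggregate $\sigma=(R,s)$ (see the Remark after Definition~\ref{def:eucma}). $\mathcal{B}$ publishes uniform $R_i^{(j)}$ and, when a query arrives, picks $s$ and $c$ for the aggregate directly. It then programs $H(R\|\mathit{mpk}\|M)=c$ with $R=g^{s}\mathit{mpk}^{-c}$. This succeeds only if the coordinator-chosen $R$ equals that value, which it generally does not, since $R$ is fixed by the published commitments. The second resolution, which I would adopt, embeds a fresh random offset into one honest commitment per query via a one-more-DL-style rewinding argument. As in the paper's proof, $H(R\|\mathit{mpk}\|M)$ is programmed only after $R$ is fixed, and it fails only if the adversary queried that point earlier, with probability at most $q_S(q_H+q_S)/p$.

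Finally, from a forgery $(M^*,R^*,s^*)$ the forking lemma on $H(R^*\|\mathit{mpk}\|M^*)$ gives two responses with $c^*\ne c'$. Then $y+\delta=(s^*-s')/(c^*-c')$, from which $\mathcal{B}$ recovers $y$. The resulting bound matches Theorem~\ref{thm:eucma} up to the forking-lemma loss.
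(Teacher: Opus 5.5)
\textbf{Gap: the signing simulation.} You correctly locate the obstacle: honest commitments are fixed at preprocessing, before $T$ and $M$ exist. You also rightly reject your first fix, but the one you adopt does not close the gap. Whatever offset you embed, a simulator that does not know $y$ must still program $H(R\|\mathit{mpk}\|M)$ at query time. Your bound $q_S(q_H+q_S)/p$ on the failure event assumes, as in plain Schnorr, that $R$ is unpredictable before the query, and here it is not. $R=\prod_{i\in T}(R_i^{(j_i)})^{\lambda_i}$ is a public function of published commitments and the adversary's own choice of $T$; one-time use makes $R$ \emph{unused}, not \emph{unknown}. So $\mathcal{A}$ may query $H(R\|\mathit{mpk}\|M)$ before calling $\mathcal{O}_{\mathrm{Sign}}(M,T)$. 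After that, $\mathcal{B}$ must output $s=\log_g R+c\,(y+\delta)$ for a fixed $c$ without knowing $y$. A one-more-DL argument would in any case replace DL by OMDL.

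This is not a proof artifact; the scheme admits a polynomial-time attack with no corruptions:
\begin{enumerate}[leftmargin=*,itemsep=0pt]
\item Fix $T$ and let $R_k$ be the publicly computable aggregate commitment of the $k$-th query on $T$, for $k\le \ell=\lceil\log_2 p\rceil$.
\item For two messages per $k$, compute $c_k^b=H(R_k\|\mathit{mpk}\|M_k^b)$, and set $\rho_k=2^{k-1}(c_k^1-c_k^0)^{-1}$.
\item Set $R^*=\prod_k R_k^{\rho_k}$ and $c^*=H(R^*\|\mathit{mpk}\|M^*)$, and let $b_1,\dots,b_\ell$ be the binary digits of $c^*-\sum_k\rho_k c_k^0 \bmod p$.
\item Query $M_k^{b_k}$ in order and output $(R^*,\sum_k\rho_k s_k)$. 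It verifies because $\sum_k\rho_k c_k^{b_k}=c^*$.
\end{enumerate}
Strictly sequential consumption does not help. This ROS attack (Benhamouda et al.) needs no concurrency, only advance knowledge of the $R_k$, which preprocessing supplies by design. Hence no DL (or OMDL) reduction exists for this single-nonce scheme, and the scheme itself must change. One option is two commitments per index combined through a binding factor that hashes $M$, $T$ and all commitments, as in FROST/MuSig2, with a proof under AOMDL.

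\textbf{Comparison with the paper.} The paper's proof follows the same skeleton as yours: $\mathit{mpk}=Y$, simulated Feldman DKG, PRF hybrid, and forking on $H(R^*\|\mathit{mpk}\|M^*)$. It handles signing only by saying it follows Theorem~\ref{thm:eucma} via random-oracle programming, i.e., with the same $q_S(q_H+q_S)/p$ bound, so it does not supply the missing idea. Your threshold caveat is correct and contradicts the paper's DKG step. That step claims $t$ corrupted shares of a degree-$(t-1)$ polynomial leave one degree of freedom. Since the simulation hands those shares to $\mathcal{A}$, $t$ corruptions reveal $x$ by interpolation. Your $\mathit{mpk}=Yg^{\delta}$ treatment of Feldman bias, with $\delta$ interpolated from the corrupted dealers' shares received by simulated honest parties, is the standard fix. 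It is fine when at least $t$ parties are honest. What your analysis actually supports is that the statement cannot be proved for the scheme as specified, not a bound matching Theorem~\ref{thm:eucma}.
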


\begin{proof}
The structure follows Theorem~\ref{thm:eucma}. $\mathcal{B}$ receives $(g, Y = g^y)$ and sets $\mathit{mpk} = Y$.

\emph{DKG simulation.} For corrupted parties $C$ with $|C| \leq t$, $\mathcal{B}$ chooses shares $\{x_i\}_{i \in C}$ and provides them to $\mathcal{A}$. For honest parties, shares are implicitly defined by the constraint that interpolation over any $t+1$ shares yields $x = y$. Since $|C| \leq t$ and the polynomial has degree $t-1$, these $t$ shares leave one degree of freedom; honest shares are statistically determined but unknown to $\mathcal{A}$ (standard argument).

$\mathcal{B}$ simulates Feldman commitments for corrupted shares honestly. For honest shares, $\mathcal{B}$ computes $Y_i = g^{x_i}$ from the Feldman commitments (publicly computable) but does not know $x_i$.

\emph{PRF replacement.} As in Theorem~\ref{thm:eucma}, replace honest PRFs with random functions.

\emph{Signing simulation.} For signing queries on threshold set $T$ containing at least one honest signer: the argument follows Theorem~\ref{thm:eucma} with $\mathit{apk}$ replaced by $\mathit{mpk} = Y$. The aggregate signature is simulated using random oracle programming.

\emph{Forgery extraction.} The forking lemma yields $\log_g(\mathit{mpk}) = y$.
\end{proof}

\begin{remark}[Threshold Robustness]
\label{thres_robust}
If up to $t$ signers submit incorrect shares, the remaining honest signers can still produce a valid signature (assuming $\geq t+1$ honest signers are available). Since shares are verifiable: $g^{s_i} \stackrel{?}{=} (R_i^{(j_i)})^{\lambda_i} \cdot Y_i^{c \lambda_i}$ where $Y_i = g^{x_i}$ from Feldman commitments. Invalid shares are excluded, and reconstruction proceeds with any $t+1$ verified shares.    
\end{remark}

\subsection{Asynchronous Refill Security}
\label{sec:async}

\begin{theorem}[Asynchronous Refill]
\label{thm:async}
EU-CMA security (Theorem~\ref{thm:eucma}) holds when signers independently extend their commitment pools at arbitrary times.
\end{theorem}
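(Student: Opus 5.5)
The plan is to reduce Theorem~\ref{thm:async} to Theorem~\ref{thm:eucma}. I will show that an adversary in the asynchronous-refill experiment is no stronger than one in the experiment of Definition~3. In the asynchronous setting the adversary may additionally choose, for each honest signer, when $\mathcal{O}_{\mathrm{Preprocess}}(i)$ is invoked, interleaving these calls arbitrarily with signing queries, corruptions and random-oracle queries. It may also have the coordinator request batches of varying size. But Definition~3 already gives $\mathcal{A}$ adaptive access to $\mathcal{O}_{\mathrm{Preprocess}}(i)$ at arbitrary times. So the real content is that nothing in the proof of Theorem~\ref{thm:eucma} depended on the order or timing of preprocessing relative to signing. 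I will verify this game by game.

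First, for $G_0 \to G_1$, the PRF hybrid: the commitment for index $j$ of signer $i$ is $g^{f_i(\texttt{"upTPM-nonce"}\|pk_i\|j)}$. The input depends only on $(i,j)$, and the map $j \mapsto w_i^{(j)}$ is fixed once the seed is fixed, however batches are partitioned in time. Hence the PRF distinguisher $\mathcal{C}$ simply forwards whichever indices are requested, in whatever order. The loss stays $n\cdot\mathrm{Adv}^{\mathrm{prf}}_{\mathcal{C}}$. Second, I will establish the key invariant in $G_1$: each honest $w_i^{(j)}$ is uniform and independent of all other values, and is used in at most one signing equation. The latter follows because $\mathit{gen}_i$ and $\mathit{next}_i$ are monotone NV counters with $\mathit{next}_i \le \mathit{gen}_i+1$. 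Refills only increase $\mathit{gen}_i$ and never touch $\mathit{next}_i$. A refill therefore produces fresh indices $j>\mathit{gen}_i\ge \mathit{next}_i-1$, which are disjoint from all consumed indices. I will state this as a short lemma by induction on the sequence of TPM operations, with cases: refill, successful sign, or aborted sign.

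Third, in the DL reduction the simulator $\mathcal{B}$ answers each refill for the target signer $j^*$ lazily. It samples $\tilde w_{j^*}^{(j)}$ at the moment the batch is requested, which is possible because these values are independent of the past. Signing queries use the next unconsumed index exactly as before. The bad event is that $H(R\|\mathit{apk}\|M)$ was queried before being programmed. Under asynchronous refill a commitment may be published long before use, so the adversary could query $H$ on a point containing it in advance. I expect this to be the main obstacle. I would handle it by programming at signing time and bounding the collision probability by the number of prior oracle queries. The honest target's commitment is uniform and contributes a uniform factor to $R$, but since it is already public, the real freshness comes from the simulator's ability to choose it. The fix I would try first is to have $\mathcal{B}$ sample $\tilde w$ and a challenge jointly, with a HVZK-style simulation of the target's contribution. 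The alternative is to argue that each published $R_{j^*}^{(j)}$ can be matched with at most $q_H$ prior queries, which retains a $q_S(q_H+q_S)/p$-type term as in Theorem~\ref{thm:eucma}, since the number of published commitments used is at most $q_S$.

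Finally, forgery extraction via the forking lemma is unchanged, since it rewinds only on the forgery's oracle query. Refill oracle calls after the fork point are re-answered with fresh uniform exponents, which keeps the two executions identically distributed up to the fork. Combining these steps gives the same bound as Theorem~\ref{thm:eucma}, independent of refill timing and batch size $B$.
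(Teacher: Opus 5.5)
Your framing matches the paper's proof. That proof consists only of three remarks: in $G_1$ the law of $w_i^{(j)}$ depends on $f_i$ and $j$ but not on publication time; $\mathcal{B}$ samples the target's commitments as uniform group elements whenever they are requested; and $\mathit{next}_i$ is monotone under any refill schedule. Your opening observation is a more direct version of this argument. Definition~\ref{def:eucma} already gives $\mathcal{A}$ adaptive access to $\mathcal{O}_{\mathrm{Preprocess}}$ at arbitrary times, so the asynchronous experiment is an instance of it. Relative to Theorem~\ref{thm:eucma}, that observation is all the statement needs.

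The genuine gap is your third step. There you reopen the signing simulation, flag the pre-queried point $H(R\|\mathit{apk}\|M)$ as the main obstacle, and propose two fixes, neither of which works.
\begin{itemize}
\item The HVZK-style fix needs $R_{j^*}^{(j)} = g^{z}(pk_{j^*}^{a_{j^*}})^{-c}$ with $c$ known at publication time. But $c=H(R\|\mathit{apk}\|M)$ depends on the message, on the signer set (through $a_{j^*}$ and $\mathit{apk}$), and on co-signer commitments the adversary may choose after seeing $R_{j^*}^{(j)}$. None of these exist at preprocessing time.
\item The counting fix fails because a $q_H/p$ bound presumes $R$ has a component unpredictable to $\mathcal{A}$ when it queries $H$. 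A published $R_{j^*}^{(j)}$ is known, so $\mathcal{A}$ can query exactly that point with probability $1$.
\item Even with a programmed $c$, knowing $\tilde w_{j^*}^{(j)}$ but not $x_{j^*}$ leaves $\mathcal{B}$ unable to output $\tilde w_{j^*}^{(j)} + c\,a_{j^*}x_{j^*}$. So ``signing queries use the next unconsumed index exactly as before'' has no content.
\end{itemize}

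This obstacle is not created by asynchrony. In Definition~\ref{def:eucma} every honest commitment is handed to $\mathcal{A}$ by $\mathcal{O}_{\mathrm{Preprocess}}$ before $\mathcal{O}_{\mathrm{Sign}}$ consumes it. Nor can it be patched, because it corresponds to a real attack. Strictly sequential consumption does not help, since the commitments for indices $\mathit{next}_i,\ldots,\mathit{gen}_i$ are all public at once. The attack runs as follows.
\begin{enumerate}
\item Take $S=\{j^*\}$ and obtain $\ell>\log_2 p$ commitments $R_1,\ldots,R_\ell$.
\item For each index pick two messages $M_k^0,M_k^1$, with challenges $c_k^b=H(R_k\|\mathit{apk}\|M_k^b)$.
\item Set $\alpha_k=2^{k-1}/(c_k^1-c_k^0)$, $R^*=\prod_k R_k^{\alpha_k}$, and $c^*=H(R^*\|\mathit{apk}\|M^*)$ for a fresh $M^*$.
\item Read off bits $b_k$ with $\sum_k\alpha_k c_k^{b_k}=c^*$ from the binary expansion of $c^*-\sum_k\alpha_k c_k^0$.
\item Request signatures $(R_k,s_k)$ on $M_k^{b_k}$ in index order.
\item Output $(R^*,\sum_k\alpha_k s_k)$, which verifies on $M^*$.
\end{enumerate}
This is the ROS attack of Benhamouda et al. It is precisely why MuSig2 and FROST combine several nonces with a message-dependent coefficient.

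So the simulation you are looking for does not exist for this scheme. The paper's proof never confronts this, and the ``fresh aggregate commitment'' bound in Theorem~\ref{thm:eucma} silently assumes it away. You should either stop after your first paragraph and present the result honestly as conditional on Theorem~\ref{thm:eucma}, or record that the unconditional claim fails.
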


\begin{proof}
In Game~1 of Theorem~\ref{thm:eucma}, each honest signer's nonces are independently random. The distribution of $w_i^{(j)}$ depends only on the random function $f_i$ and the index $j$, not on when $R_i^{(j)}$ is published. An adversary observing publication times learns only timing metadata.

The reduction in Theorem~\ref{thm:eucma} does not depend on publication timing: $\mathcal{B}$ generates commitments for honest signers as uniform random group elements regardless of when $\mathcal{A}$ requests them. The one-time-use invariant is enforced by $\mathit{next}_i$, which advances monotonically regardless of preprocessing schedule.
\end{proof}

\subsection{TPM-Attested Commitments and Rogue Key Resistance}
\label{sec:attested}

TPM attestation provides a hardware-backed authenticity and state-binding mechanism for preprocessing commitments.

\begin{theorem}[Attestation Guarantees]
\label{thm:attested}
If the TPM hardware of signer $i$ is uncompromised but its host software is fully adversary-controlled, then:
\begin{enumerate}[label=(\roman*),itemsep=0pt]
\item The adversary cannot substitute a commitment $R_i' \neq R_i^{(j)}$ that passes attestation verification.
\item The adversary cannot cause the TPM to produce a partial signature for an index $j < \mathit{next}_i$.
\item EU-CMA security (Theorem~\ref{thm:eucma}) is maintained: the information available to a compromised host 
falls within the adversary's view in the security model of Definition~3. 
\end{enumerate}
\end{theorem}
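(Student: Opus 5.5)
The plan is to prove the three items separately. Each reduces to a hardware assumption about the TPM or to Theorem~\ref{thm:eucma}. We model an ``uncompromised TPM'' as an oracle-like black box with three properties. First, $\mathit{seed}_i$, $x_i$ and the attestation key $\mathit{AK}_i$ are never output. Second, $\mathsf{TPM2\_Quote}$ signs only digests the TPM computed internally. Third, the NV counters $\mathit{gen}_i,\mathit{next}_i$ are monotone, as the specification~\cite{tpm_spec} guarantees. The compromised host can then only submit commands to this interface and observe its outputs.

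For (i), we reduce to the EU-CMA security of the TPM's attestation signature under $\mathit{AK}_i$. Suppose an accepted quote binds $R_i'\neq R_i^{(j)}$ to $(pk_i,j)$. Then either the host obtained a quote on $H(R_i'\|pk_i\|j)$, which the TPM never issues because it hashes only $g^{w_i^{(j)}}$ that it computed itself, or it forged one. The second case breaks the attestation scheme. The first case can also arise through a collision $H(R_i'\|pk_i\|j)=H(R_i^{(j')}\|pk_i\|j')$, which in the random oracle model happens with probability at most $q_H^2/p$. Including $pk_i$ and $j$ in the hash rules out cross-signer and cross-index replays.

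For (ii), we argue directly from step~2(a) of the signing protocol. The TPM produces $s_i$ only after checking $j_i=\mathit{next}_i$, and increments $\mathit{next}_i$ atomically before releasing $s_i$, in steps~2(h)--(i). Monotonicity of the counter, including across power loss, gives the invariant that every index $j<\mathit{next}_i$ has been rejected at all later times. An induction over the sequence of TPM commands then shows that each index yields at most one $s_i$. Hence no two distinct challenges are ever answered with the same $w_i^{(j)}$. This is the one-time-use invariant assumed in Theorem~\ref{thm:eucma}.

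For (iii), which we expect to be the main obstacle, we give a view-simulation argument. Everything a compromised host sees is the commitments $R_i^{(j)}$, the quotes, the coordinator messages, the counter values and the partial signatures $s_i$. The first four are public in Definition~3, or can be simulated by the reducer, which can hold $\mathit{AK}_i$ itself. The delicate point is $s_i$, because the remark after Definition~3 hides individual shares. We would handle it as follows. A host-compromised signer $i$ is treated as corrupted, which the model allows through $\mathcal{O}_{\mathrm{Corrupt}}$, except that the reducer still keeps $\mathit{seed}_i$ and $x_i$ inside the simulated TPM. This is a weaker adversary than full corruption. The adversary obtains $s_i$ only in sessions whose partial signatures it could also receive as the controlling party, so for that signer the view is a subset of the corrupted-signer view. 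Theorem~\ref{thm:eucma} then applies, provided at least one other signer in $S^*$ is fully honest, so that the guessed target $j^*$ is never host-compromised. We will have to state that restriction explicitly. Otherwise one must argue that $s_i$ is simulatable, via $g^{s_i}=R_i^{(j)}pk_i^{a_ic}$, by programming $H$ and choosing $s_i$ first. This is standard Schnorr HVZK simulation, and it is valid because (ii) guarantees a fresh $R_i^{(j)}$ per challenge.
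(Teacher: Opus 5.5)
Your (i) and (ii) follow the paper's own arguments: unforgeability of the quote under $\mathit{AK}_i$, and the $j=\mathit{next}_i$ check on a monotone counter. You add useful precision. In particular, you state openly an assumption the paper leaves tacit: the TPM only quotes digests it computed itself. This is a real modeling choice. Commodity $\mathsf{TPM2\_Quote}$ signs caller-supplied qualifying data, so without the assumption a compromised host could simply ask the TPM to quote $H(R_i'\|pk_i\|j)$.

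The real difference is in (iii). The paper just asserts that the host's view ($R_i^{(j)}$, $s_i$, choice of message) is exactly the view of Definition~\ref{def:eucma}. Yet the remark after that definition withholds honest shares $s_i$ from the adversary. You instead count the host-compromised signer as corrupted, and require a different, fully honest signer in $S^*$ to play $j^*$. This is a weaker corruption than $\mathcal{O}_{\mathrm{Corrupt}}$, since $\mathit{seed}_i$ stays inside the TPM. It needs $\mathcal{O}_{\mathrm{Corrupt}}$ to be read as handing the adversary signer $i$'s TPM as a signing device, which is the only sensible reading because $x_i$ is not revealed.

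Your restriction is forced, not a weakness. A host that controls message selection can have TPM$_i$ sign any $M^*$ with $S^*=\{i\}$. If $i$ still counted as honest, the winning condition would be met trivially. The paper's version buys brevity; yours buys a statement that is actually true.

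Drop the HVZK fallback, though. Choosing $s_i$ and $c$ first means fixing $R_i^{(j)}=g^{s_i}pk_i^{-a_ic}$. But $R_i^{(j)}$ is published at preprocessing time, before $S$ (hence $a_i$), $M$, or the co-signers' commitments exist.

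Moreover, commitments are public and consumption is sequential. So the adversary knows the next session's aggregate $R$ in advance and can query $H(R\|\mathit{apk}\|M)$ before issuing the signing request. That point then cannot be programmed, and one-time use of $R_i^{(j)}$ does not help. Even a working simulation of $s_i$ would not rescue the ``$i$ stays honest'' reading, unless messages signed through the compromised host were also put into $\mathcal{Q}$.
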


\begin{proof}
(i) The attestation quote $\mathit{att}_i^{(j)} = \mathsf{TPM2\_Quote}(\mathit{AK}_i, H(R_i^{(j)} \| pk_i \| j))$ is a signature under $\mathit{AK}_i$. Forging requires compromising $\mathit{AK}_i$ (TPM hardware).

(ii) The TPM checks $j = \mathit{next}_i$ before signing. The adversary can submit arbitrary requests, but the TPM rejects any $j \neq \mathit{next}_i$.

(iii) A compromised host observes $R_i^{(j)}$ and $s_i$ and controls message selection. This is exactly the information model in Theorem~\ref{thm:eucma}.
\end{proof}

\begin{remark}[Limitations of attestation]
Attestation does not by itself prove: that the TPM's RNG produced high-quality randomness for $\mathit{seed}_i$; that no physical side-channel attack has been mounted~\cite{svenda2024}; or that the host cannot trigger signing in a context the user did not intend (UI-level attacks). We present attestation as an additional defense layer that binds commitments to TPM identity and state, not as a complete guarantee against all compromise forms.
\end{remark}


\begin{theorem}[Rogue Key Resistance]
\label{thm:rogue}
The key aggregation $a_i = H_{\mathrm{agg}}(L \| pk_i)$ provides rogue-key resistance in the random oracle model.
\end{theorem}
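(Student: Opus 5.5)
We first make the statement precise, following the standard MuSig treatment of \cite{maxwell2019,nick2020musig2}. Rogue-key resistance means the following. Suppose an adversary is given an honest public key $pk_{j^*}=g^{x_{j^*}}$ and may choose any set of (possibly adversarial, unattested) keys $\{pk_i\}_{i\neq j^*}$. It wins if it outputs, together with a key list $L\ni pk_{j^*}$, the discrete logarithm of $\mathit{apk}=\prod_{i\in L} pk_i^{a_i}$ with $a_i=H_{\mathrm{agg}}(L\|pk_i)$. Producing a valid signature under $\mathit{apk}$ without the honest signer's participation is weaker than this, because signatures can be extracted into $\log_g \mathit{apk}$ by the forking lemma exactly as in Theorem~\ref{thm:eucma}. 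We will show that winning this game with non-negligible probability yields a DL solver. This matches the embedding $pk_{j^*}^{a_{j^*}}=Y$ already used in Theorem~\ref{thm:eucma}, where the nonzero, adversary-uncontrollable coefficient $a_{j^*}$ is what matters.

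The intuition is that a rogue-key attacker would like to set $pk_{k}=g^{z}\cdot pk_{j^*}^{-1}$ so that the honest key cancels. With hashed coefficients, however, $pk_k$ must be fixed before $L$, and hence before $a_{j^*}$ and $a_k$, are determined. The reduction $\mathcal{B}$ receives $(g,Y)$ and sets $pk_{j^*}=Y$. It simulates $H_{\mathrm{agg}}$ lazily with uniform values, and guesses which query $H_{\mathrm{agg}}(L\|\cdot)$, among at most $q_H$ queries, defines the list $L^*$ used in the output. The probability space is arranged so that all coefficients of $L^*$ are fixed at the first query involving $L^*$; this is possible because $L$ is a sorted list and all entries are known when it is first hashed. $\mathcal{B}$ then runs $\mathcal{A}$ to obtain $z=\log_g \mathit{apk}^*$. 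It rewinds to that first query, reprograms only $a_{j^*}=H_{\mathrm{agg}}(L^*\|pk_{j^*})$ to a fresh value $a'_{j^*}$, keeps the other coefficients, and obtains $z'$. Since the adversary's other keys in $L^*$ were fixed before the fork,
\[
z-z'=(a_{j^*}-a'_{j^*})\,y \pmod p,
\]
so $y=(z-z')(a_{j^*}-a'_{j^*})^{-1}$, with failure only when $a_{j^*}=a'_{j^*}$, which happens with probability $1/p$.

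Success probability is bounded by the general forking lemma of Bellare--Neven \cite{bellare2006multi}: if $\mathcal{A}$ wins with probability $\epsilon$, then $\mathcal{B}$ obtains two successful runs with probability at least $\epsilon(\epsilon/q_H-1/p)$. The main obstacle is that several honest keys may appear in $L^*$, or $pk_{j^*}$ may occur repeatedly. Reprogramming all their coefficients would break the linear relation. We handle this by forking on a single coefficient, treating the other coefficients in $L^*$ as fixed, and noting that any duplicate occurrence of $pk_{j^*}$ has the same hash input and so its total coefficient simply scales. A further subtlety is an adversarial key $pk_k$ that depends on $a_{j^*}$ for the same $L^*$. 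This is impossible because $pk_k\in L^*$ is part of the hash input: changing $pk_k$ changes $L^*$ and therefore yields an independent oracle point. We will state this ordering argument explicitly, then combine it with Theorem~\ref{thm:eucma} to conclude that rogue-key forgeries also contradict DL.
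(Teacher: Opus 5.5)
Your proof is correct. It also supplies the reduction that the paper's proof only points to: the paper says the claim is ``identical to MuSig2'' and that the $a_i$ are unpredictable random-oracle outputs, but gives no game and no reduction. You instead fix a concrete key-aggregation game (output $\log_g \mathit{apk}$ for some $L\ni pk_{j^*}$, with the other keys chosen adversarially). You then prove it by setting $pk_{j^*}=Y$ and forking on the first $H_{\mathrm{agg}}$ query involving $L^*$, so the adversarial keys and their coefficients are identical in both runs and cancel in $z-z'$. What your route buys is a genuine model with adversarially chosen keys, which Definition~\ref{def:eucma} never provides since all keys there come from $\mathsf{Setup}$, together with an explicit bound. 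The paper's version buys only brevity.

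Two points to fix. First, your reduction does not ``match'' the embedding of Theorem~\ref{thm:eucma}, so do not cite it as support. There $pk_{j^*}=Y^{a_{j^*}^{-1}}$ is defined after programming $a_{j^*}=H_{\mathrm{agg}}(L\|pk_{j^*})$. That is circular, because $pk_{j^*}$ is part of that hash input and is already published at $\mathsf{Setup}$. Your choice $pk_{j^*}=Y$ with a fork on $H_{\mathrm{agg}}$ is the sound version. Second, to pass from signature forgeries to DL you cannot reuse the single fork of Theorem~\ref{thm:eucma}. A fork on $H$ gives $\log_g \mathit{apk}^*$, and isolating $y$ then requires your fork on $H_{\mathrm{agg}}$ nested around it. That is a double-forking lemma as in MuSig, with a correspondingly looser bound. (Minor: if $pk_{j^*}$ occurs $m$ times in $L^*$, divide by $m(a_{j^*}-a'_{j^*})$ rather than $a_{j^*}-a'_{j^*}$.)
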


\begin{proof}
This is identical to MuSig2~\cite{nick2020musig2,maxwell2019}. Aggregation coefficients are random oracle outputs, unpredictable when the adversary chooses rogue keys.
\end{proof}

\section{\uppercase{Performance Evaluation}}
\label{sec:performance}

We evaluate upTPM through a simulation-based benchmark suite implemented in Python~3.11 on Ubuntu~24.04 (AMD Ryzen 7, 32\,GB RAM) using the 
\texttt{ecdsa} library (v0.18) for secp256k1 group operations and comparing against MuSig2~\cite{nick2020musig2}, FROST~\cite{komlo2020frost} under identical group operations for fairness. MuSig2 is modeled with one nonce pair per preprocessing session (the single-nonce variant); the concurrent-security variant using multiple nonce pairs would increase MuSig2's per-session storage proportionally but does not affect the qualitative comparison. We also report results under simulated TPM hardware latencies calibrated from the TPMScan study~\cite{svenda2024} and the TPM~2.0 specification~\cite{tpm_spec}.


\subsection*{Storage Requirements}
\label{sec:eval_storage}

The primary quantitative advantage of upTPM is constant per-signer TPM storage regardless of the number of preprocessed sessions, as shown in Figure \ref{fig:storage}

\begin{figure}[ht]
\centering
\begin{tikzpicture}
\begin{semilogyaxis}[
  width=\columnwidth, height=4.5cm,
  xlabel={Preprocessed sessions ($k$)},
  ylabel={Per-signer storage (bytes)},
  legend style={at={(0.03,0.97)},anchor=north west,font=\scriptsize},
  grid=major, grid style={gray!30},
  xmin=0, xmax=550, ymin=50, ymax=50000,
]
\addplot[blue,thick,mark=o,mark size=1.5] coordinates {
  (10,80)(50,80)(100,80)(127,80)(200,80)(500,80)
};
\addlegendentry{upTPM (constant)}
\addplot[red,thick,mark=square,mark size=1.5] coordinates {
  (10,672)(50,3232)(100,6432)(127,8160)(200,12832)(500,32032)
};
\addlegendentry{MuSig2/FROST (linear)}
\addplot[gray,dashed,thick,domain=0:550] {8192};
\addlegendentry{TPM NVRAM limit (8\,KB)}
\end{semilogyaxis}
\end{tikzpicture}
\caption{Per-signer TPM NVRAM usage. upTPM stores 80\,B regardless of $k$. MuSig2/FROST exceeds the typical 8\,KB TPM NVRAM limit at $k = 128$.}
\label{fig:storage}
\end{figure}

upTPM stores $\mathit{seed}_i$ (32\,B) + $x_i$ (32\,B) + $\mathit{gen}_i$ (8\,B) + $\mathit{next}_i$ (8\,B) = 80\,B. MuSig2/FROST stores $x_i$ (32\,B) + $k$ nonce pairs (64\,B each) = $32 + 64k$\,B. With typical TPM NVRAM of 8\,KB, MuSig2 supports at most $\lfloor(8192-32)/64\rfloor = 127$ sessions. At $k = 127$, MuSig2 uses $102\times$ more storage than upTPM. At $k = 500$, the ratio grows to $400\times$. Practically, at one signing per day, MuSig2 exhausts its preprocessing in roughly four months, forcing all signers online simultaneously for interactive refill. upTPM faces no such deadline.

\subsection*{Online Signing Latency}
\label{sec:eval_signing}

upTPM's online signing involves more computation per signer than MuSig2 because each signer must \emph{recompute} the nonce scalar from the seed and \emph{verify} its commitment (Figre \ref{fig:signing_compute}). In MuSig2 with preprocessing, the nonce is already in memory.


\begin{figure}[ht]
\centering
\begin{tikzpicture}
\begin{axis}[
  width=\columnwidth, height=4.5cm,
  xlabel={Number of signers ($n$)},
  ylabel={Online signing latency (ms)},
  legend style={at={(0.03,0.97)},anchor=north west,font=\scriptsize},
  grid=major, grid style={gray!30},
  xmin=0, xmax=55,
]
\addplot[blue,thick,mark=o,mark size=1.5] coordinates {
  (2,4.09)(3,7.12)(5,12.32)(10,21.64)(20,44.98)(50,112.24)
};
\addlegendentry{upTPM}
\addplot[red,thick,mark=square,mark size=1.5] coordinates {
  (2,0.08)(3,0.10)(5,0.13)(10,0.17)(20,0.28)(50,0.62)
};
\addlegendentry{MuSig2 (preprocessed)}
\end{axis}
\end{tikzpicture}
\caption{Online signing latency (computation only, software). upTPM's cost is dominated by the commitment recomputation exponentiation (${\approx}\,2$\,ms per signer).}
\label{fig:signing_compute}
\end{figure}

upTPM's per-signer online cost is dominated by one scalar multiplication for commitment recomputation, costing ${\approx}\,2$\,ms per signer in software. MuSig2's per-signer cost is a modular multiply-and-add (${\approx}\,0.01$\,ms). The $50$--$180\times$ overhead in software reflects the fundamental cost of upTPM's constant-storage design: replacing \emph{stored} nonces with \emph{re-derived} nonces. 
upTPM and MuSig2-with-preprocessing both require one round trip. MuSig2 \emph{without} preprocessing (after nonce exhaustion) requires two, as shown in Figure \ref{fig:signing_network}.

\begin{figure}[ht]
\centering
\begin{tikzpicture}
\begin{axis}[
  width=\columnwidth, height=4.8cm,
  xlabel={Network RTT (ms)},
  ylabel={End-to-end signing latency (ms)},
  legend style={at={(0.03,0.97)},anchor=north west,font=\scriptsize},
  grid=major, grid style={gray!30},
  xmin=0, xmax=210,
]
\addplot[blue,thick,mark=o,mark size=1.5] coordinates {
  (0,6.8)(10,18.8)(50,58.1)(100,108.7)(200,207.9)
};
\addlegendentry{upTPM (1 round)}
\addplot[red,thick,mark=square,mark size=1.5] coordinates {
  (0,0.9)(10,12.0)(50,53.0)(100,102.5)(200,201.5)
};
\addlegendentry{MuSig2 preproc.\ (1 round)}
\addplot[green!60!black,thick,mark=triangle,mark size=1.5] coordinates {
  (0,1.1)(10,22.7)(50,104.8)(100,204.3)(200,402.2)
};
\addlegendentry{MuSig2 no preproc.\ (2 rounds)}
\end{axis}
\end{tikzpicture}
\caption{End-to-end signing latency with network delay ($n=3$). Under realistic network conditions, upTPM's computational overhead becomes marginal. After nonce exhaustion, MuSig2 requires two rounds, nearly doubling latency.}
\label{fig:signing_network}
\end{figure}

Under realistic network conditions (RTT $\geq 50$\,ms), upTPM's computational overhead is a small fraction of total latency. At 100\,ms RTT, upTPM takes 108.7\,ms versus MuSig2-with-preprocessing at 102.5\,ms---a difference of only 6\%. After nonce exhaustion, MuSig2 falls back to two-round signing at 204.3\,ms (100\,ms RTT) or 402.2\,ms (200\,ms RTT), nearly double upTPM's latency. Since upTPM never exhausts its preprocessing, it maintains single-round latency indefinitely.

\subsection*{Nonce Exhaustion and Recovery}
\label{sec:eval_exhaustion}

When preprocessed commitments are exhausted, upTPM and MuSig2 behave fundamentally differently.

\begin{figure}[ht]
\centering
\begin{tikzpicture}
\begin{axis}[
  width=0.75\columnwidth, height=4cm,
  ybar, bar width=18pt,
  ylabel={Refill time (ms)},
  symbolic x coords={upTPM,MuSig2},
  xtick=data,
  xticklabel style={font=\small},
  nodes near coords, nodes near coords align={vertical},
  every node near coord/.append style={font=\scriptsize},
  ymin=0, ymax=350,
  grid=major, grid style={gray!30},
]
\addplot[fill=blue!60] coordinates {(upTPM,25.7) (MuSig2,281.1)};
\end{axis}
\end{tikzpicture}
\caption{Nonce exhaustion recovery ($n=3$, refill 50 commitments). upTPM: one signer, local, no coordination. MuSig2: all signers online + interactive exchange at 100\,ms RTT.}
\label{fig:recovery}
\end{figure}

Figure \ref{fig:recovery} shows that in upTPM, a single signer preprocesses 50 new commitments locally in 25.7\,ms and publishes them. No other signer needs to be online. In MuSig2, all $n$ signers must be simultaneously online, generating nonce pairs and exchanging commitments over two network rounds, totaling 281.1\,ms. For a personal wallet where devices are intermittently available, this $10.9\times$ difference---and the coordination requirement---is the critical distinction.

\subsection{Threshold Signature Performance}
\label{sec:eval_threshold}

upTPM's threshold signing latency is comparable to FROST's across all tested configurations. The threshold extension inherits upTPM's storage and availability advantages. Figure \ref{fig:tpm_projected} presents upTPM's results under two hardware profiles from TPMScan~\cite{svenda2024}: fTPM (scalar mult.\ 5--20\,ms, HMAC 5\,ms, NV incr.\ 50\,ms) and dTPM (scalar mult.\ 50--200\,ms, HMAC 15\,ms, NV incr.\ 80\,ms).

\begin{figure}[ht]
\centering
\begin{tikzpicture}
\begin{axis}[
  width=\columnwidth, height=4.5cm,
  ybar, bar width=10pt,
  ylabel={Latency (ms)},
  symbolic x coords={Software,fTPM,dTPM},
  xtick=data,
  xticklabel style={font=\small},
  legend style={at={(0.03,0.97)},anchor=north west,font=\scriptsize},
  ymin=0.1,
  grid=major, grid style={gray!30},
  ymode=log, log origin=infty,
]
\addplot[fill=blue!60] coordinates {(Software,0.5) (fTPM,25.1) (dTPM,152.9)};
\addlegendentry{Preprocess/commit}
\addplot[fill=red!50] coordinates {(Software,9.3) (fTPM,206.2) (dTPM,673.9)};
\addlegendentry{Online signing ($n{=}3$)}
\end{axis}
\end{tikzpicture}
\caption{Projected performance under simulated TPM hardware latency. On real hardware, both operations are dominated by TPM latency, narrowing the gap with MuSig2.}
\label{fig:tpm_projected}
\end{figure}

On TPM hardware, upTPM's \emph{relative} overhead versus MuSig2 narrows from ${\sim}70\times$ in software to ${\sim}1.3$--$1.5\times$, because both schemes are dominated by the same TPM operations. upTPM's additional exponentiation is one more TPM operation.

\subsection{Communication Overhead}
\label{sec:eval_comm}

When preprocessed, all three schemes require $2n$ online messages. After nonce exhaustion, MuSig2/FROST adds $n$ messages for commitment exchange ($3n$ total, one extra round trip). upTPM always operates with $2n$ messages. Preprocessing communication ($B$ messages per signer per batch) is amortized over $B$ signing sessions. Table \ref{tab:comparison} shows a comparison of the proposed threshold signature with other schemes.

\begin{table}[ht]
\centering
\caption{Comparison with existing schemes}
\label{tab:comparison}
\small
\begin{tabular}{lcccc}
\toprule
& \textbf{upTPM} & \textbf{MuSig2} & \textbf{FROST} & \textbf{ROAST} \\
\midrule
Coord.\ trust & None$^*$ & None$^*$ & None$^*$ & None$^*$ \\
Online rounds & 1 & 1$^\dagger$ & 1$^\dagger$ & 1$^\dagger$ \\
Preproc.\ bound & $\infty$ & $k$ & $k$ & $k$ \\
Signer storage & $O(1)$ & $O(k)$ & $O(k)$ & $O(k)$ \\
Async.\ refill & Yes & No & No & No \\
Sig.\ size & 64\,B & 64\,B & 64\,B & 64\,B \\
\bottomrule
\end{tabular}

\vspace{0.3em}
{\scriptsize $^*$Untrusted for unforgeability. $^\dagger$With preprocessing; requires interactive refill when exhausted.}
\end{table}

upTPM makes a considerable tradeoff of a modest per-session computational cost (one additional scalar multiplication) in exchange for constant storage and unconditional single-round signing. This tradeoff is favorable for TPM-constrained personal devices with intermittent connectivity.

\section{\uppercase{Conclusion}}
\label{sec:conclusion}

This paper presents upTPM, a framework that achieves an unbounded preprocessing for Schnorr multi-signatures on TPM-constrained devices. Our protocol stores a single secret seed per TPM, deriving nonce commitments deterministically, and publishes only commitments to an untrusted coordinator. Nonce scalars never leave the TPM, and a two-counter index management scheme enforces the one-time-use invariant through TPM hardware monotonic counters. Our results confirm that upTPM's constant 80-byte storage represents a $102\times$ reduction compared to MuSig2 at $k = 127$ sessions, that the computational overhead of nonce recomputation is absorbed by TPM operation latency and network delay in realistic deployments, and that asynchronous refill is $10.9\times$ faster than interactive refill. We proved EU-CMA security in the random oracle model under the DL assumption and PRF security. The proof follows the standard approach for Schnorr multi-signatures, with the preprocessing simulation being the technically delicate component. We also formalized asynchronous refill and TPM-attested commitments as additional properties. Our threshold extension demonstrates that unbounded preprocessing generalizes to the $(t,n)$ setting. 
Future works includes using 
tighter security bounds, potentially using the algebraic group model and optimized protocol integration with the FROST RFC~9591 ecosystem.

\bibliographystyle{apalike}
{\small
\bibliography{secrypt}}

\end{document}